\newtheorem{thm}{Theorem}
\title[A Novel Mixture Model for Characterizing Human Aiming Performance Data]{A Novel Mixture Model for Characterizing Human Aiming Performance Data}
\author[Author 1 {\it et al.}]{Yanxi Li}
\address{Dr.~Bing Zhang Department of Statistics, University of Kentucky, Lexington, Kentucky 40509, USA}
\author{Derek S. Young}
\address{Dr.~Bing Zhang Department of Statistics, University of Kentucky, Lexington, Kentucky 40509, USA}
\email{derek.young@uky.edu}
\author{Julien Gori}
\address{ISIR, CNRS UMR 7222, Sorbonne Université, Paris, France}
\author[Y. Li et al.]{Olivier Rioul}
\address{Telecom Paris, Institut Polytechnique de Paris, France}
\begin{document}
\begin{abstract}
Fitts' law is often employed as a predictive model for human movement, especially in the field of human-computer interaction.  Models with an assumed Gaussian error structure are usually adequate when applied to data collected from controlled studies.  However, observational data (often referred to as data gathered ``in the wild") typically display noticeable positive skewness relative to a mean trend as users do not routinely try to minimize their task completion time.  As such, the exponentially-modified Gaussian (EMG) regression model has been applied to aimed movements data.  However, it is also of interest to reasonably characterize those regions where a user likely was not trying to minimize their task completion time.  In this paper, we propose a novel model with a two-component mixture structure -- one Gaussian and one exponential -- on the errors to identify such a region.  An expectation-conditional-maximization (ECM) algorithm is developed for estimation of such a model and some properties of the algorithm are established.  The efficacy of the proposed model, as well as its ability to inform model-based clustering, are addressed in this work through extensive simulations and an insightful analysis of a human aiming performance study.
\end{abstract}

\keywords{block relaxation, \and ECM algorithm, \and exponentially-modified Gaussian, \and Fitts' law, \and human-computer interaction, \and model-based clustering}

\section{Introduction}
\label{sec:intro}

An individual's reaction time and movement time are important markers about the status of their neurological system.  Neurologists believe that reaction time is, perhaps, the most widely-used measure in neuroscience and psychology for noninvasively assessing processing in the brain \citep{eLife}.  For instance, patients with Parkinson's disease are found to have prolonged reaction time and movement time \citep{Parkinson}, while slowed reaction time has also been regarded as an early feature of Alzheimer's disease \citep{Alzheimer}.  

Fitts' law \citep{Fitts} is an empirical law that describes movement time for human voluntary movement. In the Human-Computer Interaction (HCI) field, it has been adapted to model selection times in graphical user interfaces (GUIs) \citep{MacKenzie}.  Specifically, the minimum movement time $t$ needed to select a rectangular target located at a distance $d$ away, with width $w$ and height $h$, is given by $ t = a + b \log_2 \bigg(1 + \frac{d}{\text{min}(h,w)}\bigg) $. This relationship has been demonstrated in numerous GUI contexts through controlled experiments, where participants have been asked to maximize their movement performance by going ``as quickly and precisely" as possible. Usually,  movement time data collected this way has relatively low variance, and the parameters of the linear model, $a$ and $b$, are directly estimated using maximum likelihood estimation (or equivalently, ordinary least squares).

However, Fitts' model is often ill-fitting when the data arises from non-controlled settings, such as crowdsourced web-experiments \citep{Goldberg} or field studies \citep{Chapuis}. It was recently argued that in non-controlled settings, Fitts' law should be interpreted as a model of minimum observed times \citep{Julien1,Julien3}. The idea is that, in these studies, one cannot control for perturbation or participant motivation, which may increase (but not decrease) the movement time needed to select the target. At the same time, one cannot na\"ively fit a lower bound to the dataset by identifying minimum movement times, since some movements may be poorly segmented. Another opportunity for lower than possible movement times is when participants accidentally click on a target (involuntary movement). It was previously shown that in this case, Fitts' law could be recovered using an exponentially-modified Gaussian (EMG) regression \citep{Julien2}.

The EMG distribution is defined as a convolution of the distributions of two independent random variables, where one follows a Gaussian distribution and the other follows an exponential distribution. A random variable $X$ follows follows an EMG distribution if the density has the form
 \begin{equation}\label{emg}
     f(x;\mu, \sigma, \alpha) = \frac{\alpha}{2}\exp\left\{\frac{\alpha}{2}(2\mu+\alpha \sigma^2-2x)\right\}\operatorname{erfc}\Big( \frac{\mu + \alpha \sigma^2-x}{\sqrt{2}\sigma} \Big),
 \end{equation}
where $\mu \in \mathbb{R}$ and $\sigma^2$ are the variance and mean, respectively, of the Gaussian component, $\alpha>0$ is the rate of exponential component, and $\operatorname{erfc}(\cdot)$ is the complementary error function. We will write $X\sim EMG(\mu, \sigma, \alpha)$ to denote when a random variable follows the EMG distribution as defined above.  Due to its characteristic positive skew from the exponential component, the EMG distribution has provided insight into applied problems across a diverse cross-section of fields, such as microarray preprocessing \citep{Silver},  cell biology \citep{Golubev}, chromatography \citep{Yuri}, and neuropsychology \citep{Palmer}. In the present study on human aiming performance, we seek a more critical examination of the data, which begins with analyzing EMG regression fits for individual subjects.  We seek additional flexibility to understand from which process the individual's performance arises: the one characterized by the Gaussian distribution or the one characterized by the exponential distribution.  The EMG distribution does not allow for classifying such an individual observation, so we propose a competing regression model where the error structure is assumed to be a two-component mixture of a Gaussian and an exponential distribution.  Thus, both the EMG regression model and our novel mixture model are able to characterize data with positive residuals relative to a mean trend, but the latter can also serve to perform model-based clustering. Such clustering results can then assist researchers  attempting to identify outliers produced by technical or human errors. Classifying such outliers remains a topic receiving close attention in the HCI field \citep{cairns_2019}.

We must also address some computational challenges of the two models in this work.  For estimating the EMG regression model, we have found the existing computational routines to not be particularly robust, especially for large datasets like those analyzed in this work.  We develop a block-relaxation algorithm for estimating an EMG regression model with (potentially) multiple predictors.  We then develop an expectation-conditional-maximization \citep[ECM;][]{Meng} algorithm for estimating our novel mixture-of-regressions model. A computational advantage of our novel mixture-of-regressions model is the global convergence of its corresponding ECM algorithm, which we lack in the block-relaxation algorithm for the EMG regression model.

The rest of this paper is organized as follows. In Section \ref{s:model}, we introduce the EMG regression model and our novel mixture-of-regressions model, which we refer to as a \textit{mixture-of-regressions model with flare}, or \textit{flare regression model}, in short.  In Section \ref{s:inf}, we detail the algorithms used for estimating both the EMG regression model and the flare regression model.  We further establish some theoretical properties of the block-relaxation algorithm for estimating the EMG regression model, and prove the global convergence of the ECM algorithm for estimating the flare regression model. Estimation of standard errors for the estimated model parameters  and details about a model-based clustering strategy using the flare regression model are also addressed. In this same section, two simulation studies are performed.  First, a brief numerical  study is conducted to compare the estimation precision of the two algorithms, as well as to demonstrate  model-based clustering using the flare regression model. Second, a large simulation study is performed to assess the robustness of the ECM algorithm and the general efficacy of the flare regression model. In Section \ref{s:data}, we analyze human aiming performance data.  We emphasize the results from the flare regression model, which are benchmarked against the EMG regression results.  Other candidate models are considered in our analysis, but the metrics used demonstrate superior performance of the flare regression model in the presence of more extreme positive residuals. Finally, we conclude with a summary of the main results in Section \ref{s:discuss}.

\section{The Models}
\label{s:model}
For both of the models that we present, let $Y_1,\ldots,Y_n$ denote a random sample of size $n$, where each of these univariate random variables is measured with a vector of $p$-dimensional predictors, $p\in \mathbb{N}^{+}$, given by $\mathbf{X}_1,\ldots,\mathbf{X}_n$.  We use the convention that $X_{i,1}\equiv 1$, $i=1,\ldots,n$, to reflect an intercept in our models. We further let $(y_i,\mathbf{x}_i)$ denote the realizations of the pairs $(Y_i,\mathbf{X}_i)$.  Thus, our focus will be on linear regression models of the form
\begin{equation}\label{linreg}
y_i = \mathbf{x}_i \boldsymbol{\beta} + \epsilon_i,
\end{equation}
but where non-traditional (i.e., non-Gaussian) distributional structures of $\epsilon_i$ will be explored.

First we consider the EMG regression setting, where the error structure for the model in (\ref{linreg}) is $\epsilon_i \sim EMG(0, \sigma, \alpha) $,  for $i=1, \ldots, n$. We next consider the model where $\epsilon_i \sim \lambda \mathcal{N}(0, \sigma^2) + (1-\lambda) Exp(\alpha) $, for $i=1, \ldots, n$.  For this two-component mixture structure on the error terms, $\lambda \in [0, 1]$ is the mixing proportion, $\mathcal{N}(0, \sigma^2)$ is the Gaussian distribution with mean $0$ and variance $\sigma^2$, and $Exp(\alpha)$ is the exponential distribution with mean $\alpha^{-1}$, where $\alpha > 0$ is the rate parameter. This structure gives us the model we refer to as a mixture-of-regressions model with flare, or simply a flare regression model. The etiology of the term ``flare" for our purposes comes from the phenomenon that occurs in gamma-ray bursts, where flaring is an erratic emission of a huge amount of energy on a relatively short timescale \citep{refId0}.  From a data perspective, this behavior manifests as an overall (piecewise) linear trend between the response and predictor(s), but a subset of the data clearly deviates more substantially from the linear trend than the rest of the data. One may also envision this as a form of one-sided contamination.  Scatterplots of simulated data from an EMG regression model and a flare regression model are given in Web Figures 1 and 2, respectively, of the Supporting Information file.

\section{Algorithms, Estimation, and Some Properties}
\label{s:inf}
\subsection{EMG Regression and a Block-Relaxation Algorithm}
Following (\ref{emg}), the density function for the EMG regression model is
\[ f(y_i; \mathbf{x}_i, \boldsymbol{\psi}) = \frac{\alpha}{2}\exp\left\{\frac{\alpha}{2}[\alpha \sigma^2-2(y_i - \mathbf{x}_i^{\top} \boldsymbol{\beta})]\right\}\operatorname{erfc}\Big( \frac{\alpha \sigma^2-(y_i - \mathbf{x}_i^{\top} \boldsymbol{\beta})}{\sqrt{2}\sigma} \Big), \]
which yields the corresponding data loglikelihood
\begin{equation}
    \ell(\boldsymbol{\psi}) = n\Big(\log\frac{\alpha}{2} + \frac{\alpha^2\sigma^2}{2} \Big) -  \sum_{i=1}^n \left\{ \alpha(y_i - \mathbf{x}_i^{\top} \boldsymbol{\beta}) - \log \Big[ \operatorname{erfc}\Big( \frac{\alpha \sigma^2-(y_i - \mathbf{x}_i^{\top} \boldsymbol{\beta})}{\sqrt{2}\sigma} \Big) \Big] \right\}.
\end{equation}
Here, $\boldsymbol{\psi} = ( \boldsymbol{\beta}^{\top}, \sigma^2, \alpha)^{\top}$ is the parameter vector of interest.  To estimate $\boldsymbol{\psi}$, we partition it into two blocks, $(\boldsymbol{\psi}_1^{\top},\boldsymbol{\psi}_2^{\top})^{\top}$, where $\boldsymbol{\psi}_1=\boldsymbol{\beta}$ and $\boldsymbol{\psi}_2=(\sigma^2, \alpha)^{\top}$. Maximum likelihood estimation is performed by setting the objective function $\mathbf{Q}(\boldsymbol{\psi})=  \ell(\boldsymbol{\psi})$.  By using the partitioning we defined for $\boldsymbol{\psi}$, we can then apply the iterative block-relaxation algorithm of \citet{de} to estimate $\boldsymbol{\psi}$. See \textbf{Algorithm 1} in the Appendix for additional details.

The following theorem about concavity properties of EMG models allows us to make some comments about the concavity of $\mathbf{Q}(\boldsymbol{\psi})$.
\begin{thm}\label{thm1}
Let $Y|\mathbf{X}\sim EMG(\mathbf{X}^{\top}\boldsymbol{\beta}, \sigma, \alpha)$ be (conditionally) an EMG random variable.
\begin{enumerate}[leftmargin=5\parindent]
    \item The logarithm of $f(y; \mathbf{x}, \boldsymbol{\psi})$ is strictly concave in $y$.
    \item $\ell(\boldsymbol{\psi})$ is strictly concave in $\boldsymbol{\beta}$.
    \item $\ell(\boldsymbol{\psi})$ is strictly concave in $\alpha$ if $\alpha \sigma < 1$.
\end{enumerate}
\end{thm}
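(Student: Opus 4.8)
The plan is to base everything on a single analytic fact—the strict log-concavity of the EMG density in its argument, which is the content of the first statement—and then derive the other two by composing with affine maps and by one direct second-derivative computation. First I would isolate the only part of $\log f$ that can affect concavity in $y$. Writing the residual $r = y - \mathbf{x}^\top\boldsymbol{\beta}$ and $u = (\alpha\sigma^2 - r)/(\sqrt2\,\sigma)$,
\[
\log f = \log\tfrac{\alpha}{2} + \tfrac{\alpha}{2}(\alpha\sigma^2 - 2r) + \log\operatorname{erfc}(u),
\]
where the middle term is affine in $y$ and hence contributes nothing to $\partial^2\log f/\partial y^2$. Since $u$ is affine in $y$ with $\partial u/\partial y = -1/(\sqrt2\,\sigma)$, the chain rule gives $\partial^2\log f/\partial y^2 = (2\sigma^2)^{-1}(\log\operatorname{erfc})''(u)$, so it suffices to prove that $u\mapsto\log\operatorname{erfc}(u)$ is strictly concave.

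For that, I would rewrite $\operatorname{erfc}(u) = 2\bar\Phi(u\sqrt2)$ in terms of the standard normal survival function $\bar\Phi = 1-\Phi$, so that $(\log\bar\Phi)'(z) = -\lambda(z)$ with $\lambda(z) = \phi(z)/\bar\Phi(z)$ the normal hazard rate, and $(\log\bar\Phi)''(z) = -\lambda'(z)$. Using $\phi'(z) = -z\phi(z)$ one finds $\lambda'(z) = \lambda(z)\bigl(\lambda(z)-z\bigr)$, which is strictly positive because the Mills-ratio inequality $\phi(z) > z\bar\Phi(z)$—trivial for $z\le 0$, and for $z>0$ following from $\int_z^\infty\phi < z^{-1}\int_z^\infty t\phi = \phi(z)/z$—forces $\lambda(z) > z$. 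Hence $\log\bar\Phi$, and therefore $\log\operatorname{erfc}$ and $\log f$, is strictly concave, proving the first statement. As an alternative I could instead invoke preservation of log-concavity under convolution, the EMG being the convolution of a strictly log-concave Gaussian density with the log-concave exponential density.

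The second statement is then immediate: each summand of $\ell$ depends on $\boldsymbol{\beta}$ only through the affine residual $r_i = y_i - \mathbf{x}_i^\top\boldsymbol{\beta}$, so with $\phi(\cdot)$ denoting $\log f$ as a function of the residual the Hessian is $\nabla^2_{\boldsymbol{\beta}}\,\ell = \sum_{i=1}^n \phi''(r_i)\,\mathbf{x}_i\mathbf{x}_i^\top$. Each $\phi''(r_i)<0$ by the first statement, so this matrix is negative definite provided $\mathbf{x}_1,\dots,\mathbf{x}_n$ span $\mathbb{R}^p$, i.e. the design has full column rank—an assumption I would state explicitly. For the third statement I would differentiate $\ell$ twice in $\alpha$ with $\boldsymbol{\beta},\sigma$ held fixed; since $u_i$ is affine in $\alpha$ with $\partial u_i/\partial\alpha = \sigma/\sqrt2$, the only curvature from the $\operatorname{erfc}$ terms is $(\sigma^2/2)\sum_i(\log\operatorname{erfc})''(u_i)$, giving
\[
\frac{\partial^2\ell}{\partial\alpha^2} = -\frac{n}{\alpha^2} + n\sigma^2 + \frac{\sigma^2}{2}\sum_{i=1}^n(\log\operatorname{erfc})''(u_i) = \frac{n}{\alpha^2}\bigl(\alpha^2\sigma^2-1\bigr) + \frac{\sigma^2}{2}\sum_{i=1}^n(\log\operatorname{erfc})''(u_i).
\]
The sum is strictly negative by the first statement, and the leading term is strictly negative exactly when $\alpha^2\sigma^2<1$; this is precisely where the hypothesis $\alpha\sigma<1$ enters, and together the two terms give $\partial^2\ell/\partial\alpha^2<0$.

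The main obstacle is the single genuinely analytic step, namely the strict monotonicity of the normal hazard rate (equivalently the strict concavity of $\log\operatorname{erfc}$); once that is secured, parts two and three are bookkeeping that exploit the affinity of $u_i$ in each of $y$, $\boldsymbol{\beta}$, and $\alpha$, together with the clean separation of the $\alpha$-Hessian into a term controlled by $\alpha\sigma<1$ and a term controlled by part one.
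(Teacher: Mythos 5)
Your proof is correct, and it follows what is essentially the same route as the paper's argument (given in Web Appendices A--C): everything is reduced to the strict concavity of $u\mapsto\log\operatorname{erfc}(u)$, established through the normal hazard rate $\lambda(z)=\phi(z)/\bar{\Phi}(z)$ and the Mills-ratio inequality $\phi(z)>z\bar{\Phi}(z)$, after which parts (2) and (3) follow by affine composition in $\boldsymbol{\beta}$ and $\alpha$ respectively. Your computations check out: $\partial^2\log f/\partial y^2=(2\sigma^2)^{-1}(\log\operatorname{erfc})''(u)$, the identity $\lambda'(z)=\lambda(z)\bigl(\lambda(z)-z\bigr)>0$, the Hessian $\sum_{i=1}^n g''(r_i)\,\mathbf{x}_i\mathbf{x}_i^{\top}$ for part (2), and the decomposition $\partial^2\ell/\partial\alpha^2=\frac{n}{\alpha^2}(\alpha^2\sigma^2-1)+\frac{\sigma^2}{2}\sum_{i=1}^n(\log\operatorname{erfc})''(u_i)$ for part (3), where indeed $\alpha\sigma<1$ makes the first term negative and part (1) handles the second. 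Two remarks are worth making explicit. First, your observation that part (2) requires the design vectors $\mathbf{x}_1,\ldots,\mathbf{x}_n$ to span $\mathbb{R}^p$ is a genuine and necessary qualification: without full column rank the Hessian is only negative semidefinite and the theorem's claim of \emph{strict} concavity in $\boldsymbol{\beta}$ fails, so the paper is implicitly assuming a full-rank design. Second, your proposed shortcut via preservation of log-concavity under convolution would, as stated, deliver only log-concavity rather than strict log-concavity -- the exponential density is merely (weakly) log-concave -- so to get strictness by that route you would need the stronger fact that convolving a strictly log-concave density with any log-concave density preserves strictness; your direct hazard-rate computation avoids this issue entirely and is the safer argument.
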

See Web Appendices A--C in the Supporting Information for a detailed proof of the above.

From Theorem \ref{thm1}, we conclude the objective function $\mathbf{Q}(\boldsymbol{\psi})=  \ell(\boldsymbol{\psi})$ is always strictly concave in $\boldsymbol{\beta}$ and is strictly concave in $\alpha$ if $\alpha \sigma < 1$. It is still challenging to derive sufficient conditions that ensure the strict concavity of $\mathbf{Q}(\boldsymbol{\psi})$ in $\sigma$. Hence, getting sufficient conditions that ensure the strict concavity of $\mathbf{Q}(\boldsymbol{\psi})$ in $\boldsymbol{\psi} $ remains an open problem. Due to the lack of concavity of $\mathbf{Q}(\boldsymbol{\psi})$ in $\boldsymbol{\psi} $, the global convergence of the block-relaxation  algorithm cannot be guaranteed. Moreover, due to the lack of a closed-form expression for the maximum likelihood estimator of $\boldsymbol{\psi}$, we must appeal to numerical optimization methods, like Nelder–Mead and Quasi-Newton, to estimate $\boldsymbol{\psi}$. However, these problems are circumvented in the flare mixture regression setting, which we show after developing the corresponding objective function for estimating its parameters. 

\subsection{The Flare Regression Model and an ECM Algorithm}\label{subsec:flare}
For the flare regression model, the density function is
\begin{equation}\label{flaredens}
\begin{aligned}
    f(y_i; \mathbf{x}_i, \boldsymbol{\theta}) &=\frac{\lambda}{\sqrt{2 \pi \sigma^2}}\exp \Big\{ -\frac{1}{2\sigma^2} (y_i - \mathbf{x}_i^{\top} \boldsymbol{\beta})^2 \Big\} \\
    & \quad \quad \quad \quad + (1-\lambda) \alpha \exp \Big\{-\alpha (y_i - \mathbf{x}_i^{\top} \boldsymbol{\beta}) \Big\}I \Big\{ (y_i - \mathbf{x}_i^{\top} \boldsymbol{\beta}) >0\Big\},
\end{aligned}
\end{equation}
which yields the corresponding (observed) data loglikelihood
\begin{equation}\label{obs}
\begin{aligned}
    \ell_o(\boldsymbol{\theta}) &= \sum_{i=1}^n \log \Bigg\{ \frac{\lambda}{\sqrt{2 \pi \sigma^2}}\exp \Big\{ -\frac{1}{2\sigma^2} (y_i - \mathbf{x}_i^{\top} \boldsymbol{\beta})^2 \Big\}  \\
    & \quad \quad \quad \quad + (1-\lambda) \alpha \exp \Big\{-\alpha (y_i - \mathbf{x}_i^{\top} \boldsymbol{\beta}) \Big\}I \Big\{ (y_i - \mathbf{x}_i^{\top} \boldsymbol{\beta}) >0\Big \} \Bigg\}.
\end{aligned}
\end{equation}
Here, $\boldsymbol{\theta} = (\lambda, \boldsymbol{\beta}^{\top}, \sigma^2, \alpha )^{\top}$ is the parameter vector of interest. Note, however, that finding $\hat{\boldsymbol{\theta}}$ by simply using (\ref{obs}) is challenging as in most finite mixture models, so we consider the $(y_i,\mathbf{x}_i)$ as incomplete data resulting from non-observed complete data.  The data is made complete by augmenting the problem with the unobserved indicators $Z_i = I\{$observation $i$ belongs to the Gaussian component$\}$.  Thus, the complete-data loglikelihood is easily found to be
\begin{equation}\label{compdata}
\begin{aligned}
     \ell_c(\boldsymbol{\theta}) = \sum_{i=1}^n & \Bigg[ Z_i \log \left( \frac{\lambda}{\sqrt{2 \pi \sigma^2}} \exp \Big\{ -\frac{1}{2 \sigma^2} (y_i - \mathbf{x}_i^{\top} \boldsymbol{\beta})^2  \Big\} \right)\\
     & + (1 - Z_i)\log\Big( (1-\lambda)\alpha \exp \Big( -\alpha (y_i - \mathbf{x}_i^{\top} \boldsymbol{\beta})I \Big\{ (y_i - \mathbf{x}_i^{\top} \boldsymbol{\beta}) >0\Big\} \Big) \Big)  \Bigg].
\end{aligned}
\end{equation}

Maximum likelihood estimation for finite mixture models is typically performed via an expectation-maximization (EM) algorithm \citep{Dempster}. In many classic parametric mixtures, solutions of the maximization-step (M-step) exist in closed form; see \citet{mcpeel}.  However, we cannot directly estimate $\boldsymbol{\theta}$ for the flare regression model using the above complete-data setup. In particular, we lack a closed-form solution of the regression coefficient vector $\boldsymbol{\beta}$ in the M-step.  However, we mitigate this issue by implementing an iterative procedure within a conditional-maximization-step (CM-step) of an ECM algorithm.

In the first expectation-step (E-Step) for iteration $t$, $t = 0, 1, \ldots,$ we compute the expected complete-data loglikelihood as
\begin{align*}
     \mathbf{Q}(\boldsymbol{\theta}; \boldsymbol{\theta}^{(t)}) = \sum_{i=1}^n & \Bigg[ Z_i^{(t)} \log \Big( \frac{\lambda}{\sqrt{2 \pi \sigma^2}} \exp \Big\{ -\frac{1}{2 \sigma^2} (y_i - \mathbf{x}_i^{\top} \boldsymbol{\beta})^2  \Big\} \Big)\\
     & + (1 - Z_i^{(t)})\log\Big( (1-\lambda)\alpha \exp \Big( -\alpha (y_i - \mathbf{x}_i^{\top} \boldsymbol{\beta})I \Big\{ (y_i - \mathbf{x}_i^{\top} \boldsymbol{\beta}) >0\Big\} \Big)  \Big)  \Bigg],
\end{align*}
where
\begin{equation}
    Z_i^{(t)} = \frac{\frac{\lambda^{(t)}}{2\pi\sigma^{2(t)}} \exp \Big\{ -\frac{1}{2\sigma^{2(t)}} (y_i - \mathbf{x}_i^{\top} \boldsymbol{\beta}^{(t)})^2  \Big\}}{f(y_i; \mathbf{x}_i, \boldsymbol{\theta}^{(t)})}
\end{equation}
is the posterior membership probability of observation $i$ belonging to the Gaussian component of the flare regression model and the denominator is the flare regression density in (\ref{flaredens}). We then partition $\boldsymbol{\theta}$ into $(\boldsymbol{\theta}_1^{\top},\boldsymbol{\theta}_2^{\top})^{\top}$, where $\boldsymbol{\theta}_1=\boldsymbol{\beta}$ and $\boldsymbol{\theta}_2=(\sigma^2, \alpha, \lambda)^{\top}$. 

For the first CM-step, we calculate $\boldsymbol{\theta}_1^{(t+1)} = \underset{\boldsymbol{\theta}_1}{\arg\max} \ \mathbf{Q}(\boldsymbol{\theta}; \boldsymbol{\theta}^{(t)})$.  In this step, we are updating $\boldsymbol{\beta}^{(t+1)}$ by maximizing the objective function $m(\boldsymbol{\beta}) = \mathbf{Q}(\boldsymbol{\beta}; \boldsymbol{\theta}^{(t)})$ with respect to $\boldsymbol{\beta}$, subject to the linear inequality constraints $(1 - Z_i^{(t)})(y_i - \mathbf{x}_i^{\top} \boldsymbol{\beta}) \geq 0$ for $i = 1, \ldots, n$. Not surprisingly, it is challenging to calculate the closed form for the maximum likelihood estimate (MLE) of $\boldsymbol{\beta}$. Instead, we iteratively update $\boldsymbol{\beta}$ using a gradient algorithm introduced by \citet{Lange}:
\begin{equation}
    \boldsymbol{\beta}^{(t+1)} = \boldsymbol{\beta}^{(t)} -  \left[\frac{d^2m}{d\boldsymbol{\beta}^2}\right] ^{-1}\Biggl|_{\boldsymbol{\beta}=\boldsymbol{\beta}^{(t)}}   \frac{dm}{d\boldsymbol{\beta}}\Biggl|_{\boldsymbol{\beta}=\boldsymbol{\beta}^{(t)}},
\end{equation}
where
\begin{equation}
\begin{aligned}
    \frac{dm}{d\boldsymbol{\beta}} &= \sum_{i=1}^n \left[  \frac{Z_i^{(t)}}{\sigma^{2^{(t)}}} \mathbf{x}_i (y_i - \mathbf{x}_i^{\top} \boldsymbol{\beta}) + \alpha^{(t)}(1 - Z_i^{(t)}) \mathbf{x}_i  \right] \ \ \ \text{and} \\
    \frac{d^2m}{d\boldsymbol{\beta}^2} &= -\sum_{i=1}^n \frac{Z_i^{(t)}}{\sigma^{2^{(t)}}} \mathbf{x}_i \mathbf{x}_i^{\top}.
\end{aligned}
\end{equation}

%\begin{align*}
%    m(\boldsymbol{\beta}) = \sum_{i=1}^n & \Big[ Z_i^{(t)} \log \Big( \frac{\lambda^{(t)}}{\sqrt{2 \pi \sigma^{2^{(t)}}}} \exp \Big\{ -\frac{1}{2 \sigma^{2^{(t)}}} (y_i - \mathbf{x}_i^{\top} \boldsymbol{\beta})^2  \Big\} \Big)\\
%     & + (1 - Z_i^{(t)})\log\Big( (1-\lambda^{(t)})\alpha^{(t)} \exp \{ -\alpha^{(t)} (y_i - \mathbf{x}_i^{\top} \boldsymbol{\beta}) \} \Big)  \Big]
%\end{align*}

Next, set $\boldsymbol{\theta}^{(t+1/2)} = (\boldsymbol{\theta}_1^{(t+1)\top}, \boldsymbol{\theta}_2^{(t)\top})^{\top}$ for the second E-Step of the current iteration, and obtain the updated posterior membership probabilities as 
\begin{equation}
    Z_i^{(t+1/2)} = \frac{\frac{\lambda^{(t)}}{2\pi\sigma^{2(t)}} \exp \Big\{ -\frac{1}{2\sigma^{2(t)}} (y_i - \mathbf{x}_i^{\top} \boldsymbol{\beta}^{(t+1)})^2  \Big\}}{f(y_i; \mathbf{x}_i, \boldsymbol{\theta}^{(t+1/2)})}, 
\end{equation}
With $\boldsymbol{\theta}_1$ fixed at $\boldsymbol{\theta}_1^{(t+1)}$, we find $\boldsymbol{\theta}_2^{(t+1)} = \underset{\boldsymbol{\theta}_2}{\arg\max} \ \mathbf{Q}(\boldsymbol{\theta}; \boldsymbol{\theta}^{(t)})$, which yields the following MLEs that are weighted using the updated posterior membership probabilities $Z_i^{(t+1/2)}$, $i=1,\ldots,n$:
\begin{align}
    \lambda^{(t+1)} &= \frac{1}{n} \sum_{i=1}^n Z_i^{(t+1/2)} \label{mle1}\\
    \sigma^{2(t+1)} &= \frac{\sum_{i=1}^n Z_i^{(t+1/2)}(y_i - \mathbf{x}_i^{\top} \boldsymbol{\beta}^{(t+1)})^2}{\sum_{i=1}^n Z_i^{(t+1/2)}}  \ \ \ \ \ \ \text{and}  \label{mle2} \\
    \alpha^{(t+1)} &= \frac{\sum_{i=1}^n (1-Z_i^{(t+1/2)})}{\sum_{i=1}^n (1-Z_i^{(t+1/2)}) (y_i - \mathbf{x}_i^{\top} \boldsymbol{\beta}^{(t+1)})}.\label{mle3}    
\end{align}
See \textbf{Algorithm 2} in the Appendix for additional details.

%The iteration will stop when $l_o(\boldsymbol{\psi}^{(t+1)}) - l_o(\boldsymbol{\psi}^{(t)}) < \epsilon$. Still, we set $\epsilon = 10^{-4}$. For detailed algorithm demonstration, please refer to \textbf{Algorithm 2} in the Appendix.

Letting $\boldsymbol{\theta}^{(\infty)}$ and $Z_i^{(\infty)}$, $i=1,\ldots,n$, denote, respectively, the parameter estimates and posterior membership probabilities obtained upon convergence of \textbf{Algorithm 2}, we proceed to set $\widehat{\boldsymbol{\theta}}$ = $\boldsymbol{\theta}^{(\infty)}$ as our estimate for $\boldsymbol{\theta}$.  Moreover, the $Z_i^{(\infty)}$ and $1-Z_i^{(\infty)}$ are the probabilities that an observation's error term came from, respectively, the Gaussian component or the exponential component.  A decision rule can then be defined to determine component membership based on the $Z_i^{(\infty)}$s when compared to a pre-determined cut-off probability $p^*$.  Specifically, the model-based clustering strategy involving our estimated flare regression model is to classify observation $i$ as belonging to the exponential component if $1 - Z_i^{(\infty)} \geq p^*$, otherwise it is classified as belonging to the Gaussian component.  The value used for $p^*$ in our analysis will be discussed later.

%\textcolor{red}{When the iteration stops, we obtain most updated parameter estimates $\boldsymbol{\theta}^{(\infty)}$ and posterior membership probabilities  $Z_i^{(\infty)}$, $i=1,\ldots,n$. $\boldsymbol{\theta}^{(\infty)}$ will be taken as final estimates $\widehat{\boldsymbol{\theta}}$ for the parameter vector $\boldsymbol{\theta}$ ( $\widehat{\boldsymbol{\theta}}$ = $\boldsymbol{\theta}^{(\infty)}$). The two-component mixture structure on the errors for the flare regression model naturally allows for interpretations resulting from model-based clustering. In particular, $Z_i^{(\infty)}$ and $1-Z_i^{(\infty)}$ are the probabilities that an observation's error term came from, respectively, the Gaussian component and exponential component. $Z_i^{(\infty)}$ and a pre-determined cut-off probability $p^*$ ($p^* \in [0, 1]$) are critical in clustering which model component the corresponding observation $i$ belongs to. The clustering process works as follows: the observation $i$ is classified to be in the exponential component if $1 - Z_i^{(\infty)} \geq p^*$; and the observation is classified to be in the Gaussian component otherwise.}

Standard errors for mixture models like our flare regression model can be estimated in various ways.  We briefly highlight two ways.  First is to simply bootstrap to obtain the standard errors \citep[see Chapter 2 of][]{mcpeel}. Second is to employ the method due to \citet{Louis}, which calculates the \textit{observed-data information matrix} as the difference between the \textit{complete-data information matrix} and the \textit{missing-data information matrix}; i.e.,
\begin{equation*}
\mathbf{I}(\boldsymbol{\theta}) = -\mathbb{E}_{ \widehat{\boldsymbol{\theta}}} \Bigg( \frac{\partial^2  \ell_c(\boldsymbol{\theta}) }{\partial \boldsymbol{\theta} \partial \boldsymbol{\theta}^{\top}} \Bigg) - \mathbb{E}_{\widehat{\boldsymbol{\theta}}} \Bigg[ \Bigg( \frac{\partial \ell_c(\boldsymbol{\theta})}{\partial \boldsymbol{\theta}} \Bigg) \Bigg( \frac{\partial \ell_c(\boldsymbol{\theta})}{\partial \boldsymbol{\theta}} \Bigg)^{\top} \Bigg] + \mathbb{E}_{ \widehat{\boldsymbol{\theta}}} \Bigg( \frac{\partial \ell_c(\boldsymbol{\theta})}{\partial \boldsymbol{\theta}} \Bigg) \mathbb{E}_{ \widehat{\boldsymbol{\theta}}} \Bigg( \frac{\partial \ell_c(\boldsymbol{\theta})}{\partial \boldsymbol{\theta}} \Bigg)^{\top},
\end{equation*}
where, again, $\ell_c(\boldsymbol{\theta})$ is the complete-data loglikelihood in (\ref{compdata}). Detailed derivations of $\mathbf{I}(\boldsymbol{\theta})$ are in Web Appendix D of the Supporting Information.

\subsubsection{Convergence of the ECM Algorithm}\label{subsec:ECM} \hfill

We will now show the convergence of our ECM algorithm.
For brevity, we denote the above ECM algorithm map as $A(\boldsymbol{\theta})$. Also, denote the update $\boldsymbol{\beta}^{(t+1)}\equiv m(\boldsymbol{\beta}^{(t)})$ and $A(\boldsymbol{\theta}_2; \boldsymbol{\theta}_1)$ as the iterative updating procedure of $A$ assuming a fixed $ \boldsymbol{\theta}_1 =\boldsymbol{\beta} $.

\begin{thm}\label{thm2}
For any fixed $\boldsymbol{\theta}_2$, the gradient updating procedure $m$ converges to a local maximum $\boldsymbol{\beta}^{(\infty)}$.
\end{thm}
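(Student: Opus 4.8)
The plan is to treat the inner loop as Newton's method for maximizing the restricted objective $m(\boldsymbol{\beta}) \equiv \mathbf{Q}(\boldsymbol{\beta}; \boldsymbol{\theta}^{(t)})$ with $\boldsymbol{\theta}_2 = (\sigma^2,\alpha,\lambda)^{\top}$ and the memberships $Z_i^{(t)}$ held fixed, and to lean on the concavity that is already visible in the given second derivative. First I would establish that $m$ is concave in $\boldsymbol{\beta}$: the Hessian $d^2m/d\boldsymbol{\beta}^2 = -\sum_{i=1}^n (Z_i^{(t)}/\sigma^{2(t)})\mathbf{x}_i\mathbf{x}_i^{\top}$ is constant in $\boldsymbol{\beta}$ and negative semidefinite, and is negative definite whenever the $Z_i^{(t)}$-weighted design has full column rank. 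Under this mild rank condition $m$ is strictly concave, so its maximizer over the feasible set cut out by the constraints $(1 - Z_i^{(t)})(y_i - \mathbf{x}_i^{\top}\boldsymbol{\beta}) \geq 0$ is unique, and ``local maximum'' and ``global maximum'' coincide.

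Next I would verify that the update is a genuine ascent map. Writing $\mathbf{B}^{(t)} = -d^2m/d\boldsymbol{\beta}^2 \succ 0$, the search direction is $(\mathbf{B}^{(t)})^{-1}\,dm/d\boldsymbol{\beta}$, and since $(dm/d\boldsymbol{\beta})^{\top}(\mathbf{B}^{(t)})^{-1}(dm/d\boldsymbol{\beta}) \geq 0$, it is an ascent direction, strictly so away from stationary points. Because the Gaussian part of $m$ supplies the entire (constant) Hessian while the exponential part is linear in $\boldsymbol{\beta}$ on the feasible region, the second-order expansion that the \citet{Lange} step minorizes is exact there; this yields the monotonicity $m(\boldsymbol{\beta}^{(t+1)}) \geq m(\boldsymbol{\beta}^{(t)})$, with equality only at a fixed point. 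The sequence $\{m(\boldsymbol{\beta}^{(t)})\}$ is therefore nondecreasing, and it is bounded above because a strictly concave quadratic attains a finite maximum over the feasible set, so the objective values converge.

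To pass from convergence of the objective values to convergence of the iterates, I would invoke the standard global-convergence machinery for monotone ascent maps (cf.\ \citet{Lange}): the map is continuous, its fixed points are exactly the stationary points of $m$, and strict concavity forces the unique stationary point to be the maximizer. Combining continuity, the strict ascent property, and compactness of a region bracketed by a sublevel set then gives $\boldsymbol{\beta}^{(t)} \to \boldsymbol{\beta}^{(\infty)}$, the unique maximizer, which is the asserted local maximum.

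The main obstacle is the constraint structure induced by the indicator $I\{y_i - \mathbf{x}_i^{\top}\boldsymbol{\beta} > 0\}$: it makes $m$ only piecewise quadratic and the admissible set a polytope, so I must argue that the iterates remain feasible and that the limit satisfies the appropriate Karush--Kuhn--Tucker stationarity rather than merely $dm/d\boldsymbol{\beta} = \mathbf{0}$. Showing that the Lange quadratic surrogate continues to minorize $m$ across the kinks where the active-constraint set changes, together with a clean boundedness argument for $m$ over the (possibly unbounded) feasible polytope, is where the real care is needed; by contrast, the concavity and ascent steps are comparatively routine.
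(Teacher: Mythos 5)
Your proposal is correct and follows essentially the same route as the paper's proof: concavity of $m$ via the constant Hessian $-\sum_{i=1}^n (Z_i^{(t)}/\sigma^{2(t)})\mathbf{x}_i\mathbf{x}_i^{\top}$, compactness of the superlevel sets $\{\boldsymbol{\beta}\in\mathbb{R}^p : m(\boldsymbol{\beta}) \geq c\}$, and an appeal to Proposition 1 of \citet{Lange}. If anything you are more careful than the paper's four-line argument, which asserts negative definiteness without the full-column-rank condition on the weighted design that you correctly identify, and which passes over the constraint set and the kinks induced by the indicator $I\{y_i - \mathbf{x}_i^{\top}\boldsymbol{\beta} > 0\}$ that you flag as the genuinely delicate point.
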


\begin{proof}
Assuming $\dim(\boldsymbol{\beta})=p$, $p \in \mathbb{N}^+$, the matrix $d^2m / d\boldsymbol{\beta}^2 $ is negative-definite in every iteration $t$. Also, $m(\boldsymbol{\beta})$ is a continuous concave function in  $\mathbb{R}^p$. Hence, the set $\{ \boldsymbol{\beta} \in \mathbb{R}^p: m(\boldsymbol{\beta}) \geq c \}$ is compact for every constant $c$. The result follows as an immediate consequence of Proposition 1 in \citet{Lange}. $\square$
\end{proof}

\begin{thm}\label{thm3}
For any fixed $\boldsymbol{\theta}_1$, the iterative updating procedure $A(\boldsymbol{\theta}_2; \boldsymbol{\theta}_1)$ converges to a local maximum and $\mathbf{Q}(\boldsymbol{\theta}_2^{(t+1)}; \boldsymbol{\theta}_1) > \mathbf{Q}(\boldsymbol{\theta}_2^{(t)}; \boldsymbol{\theta}_1)$.
\end{thm}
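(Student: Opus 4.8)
The plan is to recognize that, once $\boldsymbol{\theta}_1=\boldsymbol{\beta}$ is held fixed, the map $A(\boldsymbol{\theta}_2;\boldsymbol{\theta}_1)$ is nothing more than an ordinary EM algorithm applied to the reduced problem of fitting a two-component mixture of $\mathcal{N}(0,\sigma^2)$ and $Exp(\alpha)$ to the \emph{fixed} residuals $r_i = y_i - \mathbf{x}_i^{\top}\boldsymbol{\beta}$, $i=1,\ldots,n$. Indeed, with $\boldsymbol{\beta}$ frozen, the second E-step (producing the $Z_i^{(t+1/2)}$) and the closed-form updates (\ref{mle1})--(\ref{mle3}) for $(\lambda,\sigma^2,\alpha)$ are exactly the E- and M-steps of this reduced EM. The theorem then splits into two standard EM facts that I must verify in the present setting: the monotone ascent of the (observed-data) objective restricted to fixed $\boldsymbol{\theta}_1$, which is what the theorem writes as $\mathbf{Q}(\cdot;\boldsymbol{\theta}_1)$, and convergence of the $\boldsymbol{\theta}_2$-iterates to a local maximizer.

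First I would establish the ascent inequality. The updates (\ref{mle1})--(\ref{mle3}) are the unique global maximizers of $\boldsymbol{\theta}_2 \mapsto \mathbf{Q}(\boldsymbol{\theta}_2;\boldsymbol{\theta}_2^{(t)})$ (the usual weighted MLEs: $\lambda$ is the average posterior weight, $\sigma^2$ the posterior-weighted residual variance, and $\alpha^{-1}$ the complementary-weighted residual mean), which is a routine first-order-condition check using the weights $Z_i^{(t+1/2)}$. I would then invoke the standard EM decomposition
\[
\ell_o(\boldsymbol{\theta}_2;\boldsymbol{\theta}_1) = \mathbf{Q}(\boldsymbol{\theta}_2;\boldsymbol{\theta}_2^{(t)}) - H(\boldsymbol{\theta}_2;\boldsymbol{\theta}_2^{(t)}),
\]
where $H$ is the conditional-entropy term of the complete-/observed-data relationship, for which the Gibbs (Jensen) inequality gives $H(\boldsymbol{\theta}_2^{(t+1)};\boldsymbol{\theta}_2^{(t)}) \le H(\boldsymbol{\theta}_2^{(t)};\boldsymbol{\theta}_2^{(t)})$. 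Combining the maximization of $\mathbf{Q}$ with this inequality yields the strict ascent asserted in the theorem, with equality only when $\boldsymbol{\theta}_2^{(t)}$ is already a fixed point of $A(\cdot;\boldsymbol{\theta}_1)$.

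Next I would address convergence. The monotone sequence of objective values converges provided it is bounded above, and here I would emphasize a structural feature that rules out the usual mixture degeneracy: because the Gaussian component is centered at the (fixed) regression fit rather than at a free location, the contribution $\sigma^{-1}\exp\{-r_i^2/(2\sigma^2)\}$ tends to $0$ as $\sigma^2\to 0$ whenever $r_i\neq 0$, so the likelihood cannot be driven to $+\infty$ along $\sigma^2\to 0$. Together with the obvious decay as $\sigma^2\to\infty$ or $\alpha\to 0,\infty$, this shows the relevant level sets $\{\boldsymbol{\theta}_2 : \ell_o(\boldsymbol{\theta}_2;\boldsymbol{\theta}_1)\ge c\}$ are compact. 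Since $A(\cdot;\boldsymbol{\theta}_1)$ is a continuous point-to-point map enjoying the strict ascent property, I would then invoke a standard global convergence result for EM algorithms \citep[see, e.g.,][]{mcpeel} to conclude that every limit point is a stationary point and, using the compact level sets and strict ascent, that the iterates converge to a local maximum.

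I expect the main obstacle to be precisely this last step: passing from convergence of the \emph{objective values} to convergence of the \emph{parameter iterates} to a genuine local maximum, rather than merely to a stationary point or a set of accumulation points. This is where the regularity hypotheses (continuity and closedness of the EM map, compactness of the level sets) must be checked carefully, and where the boundary behaviour of the mixture parameter space — the potential escape of $\sigma^2$ or $\alpha$ toward the degenerate boundary — has to be controlled. The fixed-mean structure noted above is exactly the lever I would use to rule out that escape and to secure the compactness needed for the convergence conclusion.
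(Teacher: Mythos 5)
Your proposal is correct in substance and follows the same overall strategy as the paper: once $\boldsymbol{\theta}_1=\boldsymbol{\beta}$ is frozen, $A(\boldsymbol{\theta}_2;\boldsymbol{\theta}_1)$ is a standard EM algorithm with the closed-form M-step updates (\ref{mle1})--(\ref{mle3}), and the conclusion is drawn from classical EM convergence theory. The difference is in how much of that theory you unpack. The paper's proof is essentially a citation: it notes the closed-form MLEs, declares the reduced procedure a standard EM, and appeals to \citet{Wu} for both monotonicity and convergence (after a preamble invoking Proposition 2 of \citet{Lange} for the $\boldsymbol{\beta}$-update, which concerns the fixed-$\boldsymbol{\theta}_2$ case and is not needed for the theorem as stated). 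You instead rederive the ascent property from the $\mathbf{Q}-H$ decomposition and Gibbs' inequality, and, more valuably, you verify a hypothesis that \citet{Wu} requires and the paper never checks: compactness of the relevant level sets. Your observation that the Gaussian component's mean is pinned at the fixed regression fit, so the usual mixture degeneracy $\sigma^2\to 0$ cannot drive the likelihood to $+\infty$ (provided no residual $r_i$ is exactly zero), is precisely what makes Wu's conditions verifiable here; with a free $\boldsymbol{\beta}$ the observed-data likelihood of this mixture would be unbounded, so this check is not cosmetic.

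The caveat you raise at the end is genuine, and it applies equally to the paper's own proof: Wu's theorems deliver monotone convergence of the objective values and that all limit points of the iterates are stationary points, but they do not by themselves yield convergence of the parameter iterates to a \emph{local maximum} rather than a saddle point, and the strict inequality $\mathbf{Q}(\boldsymbol{\theta}_2^{(t+1)};\boldsymbol{\theta}_1) > \mathbf{Q}(\boldsymbol{\theta}_2^{(t)};\boldsymbol{\theta}_1)$ must fail once a fixed point is reached, as you note. Your compact-level-sets-plus-strict-ascent argument narrows but does not close this gap (one still needs, e.g., isolated stationary points or an argument excluding saddles), so your final step remains an over-claim --- but it is the same over-claim the paper commits by citing \citet{Wu} for a ``local maximum'' conclusion, and you are more honest in flagging it. Stated defensibly, the theorem should assert monotone (non-strict at fixed points) ascent and convergence to a stationary point.
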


\begin{proof}
Assume a fixed $\boldsymbol{\theta}_2$. Because the updating procedure $m$ converges to the point $\boldsymbol{\beta}^{(\infty)}$, following the result from Proposition 2 in \citet{Lange}, we can conclude that for all sufficiently large $t$, either $\boldsymbol{\beta}^{(t)} = \boldsymbol{\beta}^{(\infty)}$ or $\mathbf{Q}(\boldsymbol{\theta}_1^{(t+1)}; \boldsymbol{\theta}_2) > \mathbf{Q}(\boldsymbol{\theta}_1^{(t)}; \boldsymbol{\theta}_2)$, where recall that $\boldsymbol{\theta}_1=\boldsymbol{\beta}$. Then, given any fixed $\boldsymbol{\beta}$, the existence of the closed-form MLEs of $\boldsymbol{\theta}_2$ is guaranteed; see Equations \ref{mle1}--\ref{mle3}. Assuming a fixed $\boldsymbol{\theta}_1$, the iterative updating procedure $A(\boldsymbol{\theta}_2; \boldsymbol{\theta}_1)$ is, thus, a standard EM algorithm. Hence, the convergence of $A(\boldsymbol{\theta}_2; \boldsymbol{\theta}_1)$ and the monotonicity of $\mathbf{Q}$ (i.e., $\mathbf{Q}(\boldsymbol{\theta}_2^{(t+1)}; \boldsymbol{\theta}_1) > \mathbf{Q}(\boldsymbol{\theta}_2^{(t)}; \boldsymbol{\theta}_1)$) is guaranteed by \citet{Wu}.
$\square$
\end{proof}

\begin{thm}\label{thm4}
All the limit points of the ECM sequence above are stationary points of the observed-data loglikelihood $\ell_o(\boldsymbol{\theta})$.
\end{thm}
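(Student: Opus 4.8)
The plan is to recognize \textbf{Algorithm 2} as a (multicycle) ECM procedure and to invoke the general convergence theory for such algorithms \citep{Meng,Wu}, which requires two ingredients: (i) that the observed-data loglikelihood rises monotonically along the iterates with convergent values; and (ii) a space-filling condition guaranteeing that every fixed point of the algorithm map $A$ is a stationary point of $\ell_o(\boldsymbol{\theta})$. Throughout I would work from the standard EM decomposition
\begin{equation*}
\ell_o(\boldsymbol{\theta}) = \mathbf{Q}(\boldsymbol{\theta}; \boldsymbol{\theta}') - H(\boldsymbol{\theta}; \boldsymbol{\theta}'), \qquad H(\boldsymbol{\theta};\boldsymbol{\theta}') = \mathbb{E}_{\boldsymbol{\theta}'}\big[ \log k(\mathbf{Z}\mid \mathbf{y}; \boldsymbol{\theta}) \bigm| \mathbf{y} \big],
\end{equation*}
where $k$ is the conditional density of the latent indicators $\mathbf{Z}=(Z_1,\ldots,Z_n)^{\top}$ given the observed data, together with the two classical facts $H(\boldsymbol{\theta};\boldsymbol{\theta}')\le H(\boldsymbol{\theta}';\boldsymbol{\theta}')$ (Jensen's inequality) and $\nabla_{\boldsymbol{\theta}} H(\boldsymbol{\theta};\boldsymbol{\theta}')\big|_{\boldsymbol{\theta}=\boldsymbol{\theta}'}=\mathbf{0}$.

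First I would establish ascent through each half-cycle. By Theorem \ref{thm2}, the first CM-step maximizes $\mathbf{Q}(\cdot; \boldsymbol{\theta}^{(t)})$ over $\boldsymbol{\theta}_1=\boldsymbol{\beta}$, so $\mathbf{Q}(\boldsymbol{\theta}^{(t+1/2)}; \boldsymbol{\theta}^{(t)}) \ge \mathbf{Q}(\boldsymbol{\theta}^{(t)}; \boldsymbol{\theta}^{(t)})$ and the EM inequality yields $\ell_o(\boldsymbol{\theta}^{(t+1/2)}) \ge \ell_o(\boldsymbol{\theta}^{(t)})$; the second E-step at $\boldsymbol{\theta}^{(t+1/2)}$ followed by the $\boldsymbol{\theta}_2$-maximization guaranteed by Theorem \ref{thm3} gives $\mathbf{Q}(\boldsymbol{\theta}^{(t+1)}; \boldsymbol{\theta}^{(t+1/2)}) \ge \mathbf{Q}(\boldsymbol{\theta}^{(t+1/2)}; \boldsymbol{\theta}^{(t+1/2)})$, hence $\ell_o(\boldsymbol{\theta}^{(t+1)}) \ge \ell_o(\boldsymbol{\theta}^{(t+1/2)})$. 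Chaining the two half-cycles gives the multicycle-ECM ascent property $\ell_o(\boldsymbol{\theta}^{(t+1)}) \ge \ell_o(\boldsymbol{\theta}^{(t)})$. Assuming the iterates stay in a region on which $\ell_o$ is bounded above — a standard safeguard, since a mixture loglikelihood can diverge as $\sigma^2 \downarrow 0$ — the monotone sequence $\{\ell_o(\boldsymbol{\theta}^{(t)})\}$ converges and its increments vanish.

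Next I would argue stationarity of the limit points. The posterior-probability maps $\boldsymbol{\theta}\mapsto Z_i(\boldsymbol{\theta})$ are continuous, the $\boldsymbol{\theta}_2$-update of Equations \ref{mle1}--\ref{mle3} is a continuous function of its inputs, and by Theorem \ref{thm2} the inner gradient iteration $m$ returns the unique (by concavity) conditional maximizer of $\mathbf{Q}$ in $\boldsymbol{\beta}$; hence $A$ is continuous. For a subsequential limit $\boldsymbol{\theta}^{(t_k)}\to\boldsymbol{\theta}^\ast$, continuity of $A$ together with the vanishing increments and the strict ascent of Theorems \ref{thm2}--\ref{thm3} off fixed points forces $A(\boldsymbol{\theta}^\ast)=\boldsymbol{\theta}^\ast$ (a Zangwill-type argument). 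Because the two CM-steps partition $\boldsymbol{\theta}$ into the complementary blocks $\boldsymbol{\theta}_1=\boldsymbol{\beta}$ and $\boldsymbol{\theta}_2=(\sigma^2,\alpha,\lambda)^{\top}$, whose coordinate subspaces jointly span the full parameter space — exactly the space-filling condition of \citet{Meng} — block-wise conditional maximization at $\boldsymbol{\theta}^\ast$ gives $\nabla_{\boldsymbol{\theta}_1}\mathbf{Q}(\boldsymbol{\theta};\boldsymbol{\theta}^\ast)\big|_{\boldsymbol{\theta}=\boldsymbol{\theta}^\ast}=\mathbf{0}$ and $\nabla_{\boldsymbol{\theta}_2}\mathbf{Q}(\boldsymbol{\theta};\boldsymbol{\theta}^\ast)\big|_{\boldsymbol{\theta}=\boldsymbol{\theta}^\ast}=\mathbf{0}$, and therefore $\nabla_{\boldsymbol{\theta}}\mathbf{Q}(\boldsymbol{\theta};\boldsymbol{\theta}^\ast)\big|_{\boldsymbol{\theta}=\boldsymbol{\theta}^\ast}=\mathbf{0}$. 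The gradient identity for $H$ then converts this into $\nabla\ell_o(\boldsymbol{\theta}^\ast)=\mathbf{0}$, so $\boldsymbol{\theta}^\ast$ is a stationary point of $\ell_o$.

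I expect the main obstacle to be the linear inequality constraints $(1-Z_i^{(t)})(y_i-\mathbf{x}_i^{\top}\boldsymbol{\beta})\ge 0$ entering the first CM-step, which confine the $\boldsymbol{\beta}$-maximization to a convex polyhedron rather than all of $\mathbb{R}^p$. At a limit point where some of these constraints are active, block optimality is a KKT condition rather than $\nabla_{\boldsymbol{\theta}_1}\mathbf{Q}=\mathbf{0}$, so one must show that the associated multipliers vanish — equivalently, that the limiting fit places no exponential-component residual exactly on the support boundary $y_i-\mathbf{x}_i^{\top}\boldsymbol{\beta}=0$ — or else restate stationarity in the constrained (KKT) sense. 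Verifying continuity of $A$ across changes in the active set, and justifying the boundedness/compactness assumption on the iterate sequence, are secondary but also warrant attention; both are routine relative to the constraint analysis.
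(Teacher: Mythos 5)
Your proposal is correct in outline and funnels through the same endpoint as the paper --- Meng and Rubin's ECM convergence theory --- but it is a genuinely more detailed route: the paper's proof is a four-line appeal that asserts monotonicity of $\ell_c$ from Theorems \ref{thm2} and \ref{thm3}, claims $\ell_c$ is jointly continuous and concave because the complete-data density ``belongs to the exponential family,'' and then cites Theorem 4 of \citet{Meng}, whereas you reconstruct the ingredients that citation hides: the $\mathbf{Q}-H$ decomposition with Jensen's inequality and the gradient identity for $H$, the half-cycle ascent argument (correctly recognizing the extra E-step between the two CM-steps, so the scheme is a \emph{multicycle} ECM), a Zangwill-type fixed-point argument via continuity of the map $A$, and explicit verification of the space-filling condition for the blocks $\boldsymbol{\theta}_1=\boldsymbol{\beta}$ and $\boldsymbol{\theta}_2=(\sigma^2,\alpha,\lambda)^{\top}$. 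What your version buys is rigor precisely where the paper is thinnest: exponential-family membership does not give joint concavity of $\ell_c$ in the parameterization $(\lambda,\boldsymbol{\beta},\sigma^2,\alpha)$ (concavity holds in the natural parameters), and your boundedness safeguard is needed because the flare likelihood, like most Gaussian-component mixtures, is unbounded as $\sigma^2\downarrow 0$ --- neither point is addressed in the paper's proof.

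The obstacle you flag at the end is real and is also unhandled by the paper: the indicator $I\{y_i-\mathbf{x}_i^{\top}\boldsymbol{\beta}>0\}$ makes the exponential component's support depend on $\boldsymbol{\beta}$, so $\ell_c$ is not even continuous in $\boldsymbol{\beta}$ across the boundary, the first CM-step is a constrained maximization over a polyhedron, and block optimality at a limit point with active constraints is a KKT condition rather than $\nabla_{\boldsymbol{\theta}_1}\mathbf{Q}=\mathbf{0}$. The smoothness hypotheses of both \citet{Wu} and \citet{Meng} require exactly this care, and the paper's citation of Meng's Theorem 4 silently skips it; your two proposed repairs --- showing the multipliers vanish because no exponential-component residual sits exactly on the boundary at the limit, or restating the conclusion in the KKT sense --- are the right ways to close the argument, and completing one of them would make your proof strictly stronger than the one in the paper.
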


\begin{proof}
From Theorems \ref{thm2} and \ref{thm3}, we conclude $\ell_c(\boldsymbol{\theta}^{(t+1)}) > \ell_c(\boldsymbol{\theta}^{(t)})$ for every $t$. Since the objective function $\ell_c(\boldsymbol{\theta})$ is a density function belongs to the exponential family, it is jointly continuous and concave in $ \boldsymbol{\theta}$. Hence, the corresponding CM-step always converges to a stationary point. Thus, the result is an immediate consequence of Theorem 4 in \citet{Meng}. $\square$
\end{proof}

In summary, unlike the block-relaxation method, the ECM algorithm is guaranteed to converge to a stationary point of the loglikelihood function, assuming the flare regression model. Due to the fact that the density function of the flare regression model belongs to the exponential family, this limiting point to which the ECM algorithm converged is a local maximum of the likelihood function.

We first generate a dataset of size $n=200$ from an EMG regression model, where $x_{i,1}\equiv 1$, $x_{i,2}\sim\mathcal{N}(0,1)$, $\boldsymbol{\beta}=(-2, 4)^{\top}$, and $\epsilon_i \sim EMG(0, \sigma=0.5, \alpha=0.05) $, $i=1,\ldots,n$.  Thus, $\boldsymbol{\psi}=((-2,4),0.5,0.05)^{\top}$. %Note that setting the predictor values $x_{i,2}$ according to $n$ equally-spaced values along a specified interval would typically be done in, for example, a coverage study of statistical intervals for a regression model.  But for the present illustration, generation of this predictor variable from a standard normal is not prohibitive. 
Standard errors for the point estimates are estimated via bootstrap since the EMG regression model does not meet the regularity conditions necessary for Louis' method. The results are displayed in the upper-half of Table \ref{sim1_res}. Estimation of the exponential rate $\alpha$ is fairly precise. However, we receive a slightly biased estimate of $\boldsymbol{\beta}$ and an extremely biased estimate on the variance of the Gaussian part, $\sigma^2$. This result is consistent with the properties of the EMG regression discussed in Section \ref{subsec:flare}.  The fact the $\alpha \sigma < 1$ guarantees the loglikelihood to be strictly concave in $\boldsymbol{\beta}$ and $\alpha$ results in relatively accurate estimates of $\boldsymbol{\beta}$ and $\alpha$. However, due to the lack of concavity of the loglikelihood in $\sigma$, its resulting estimate is noticeably less accurate. 

\begin{table}
\caption{\label{sim1_res} Parameter estimates for two regression models examples}
\centering
\hskip-1.0cm\begin{tabular}{cccc}
\toprule
\multirow{2}{*}{\textbf{Parameter}} & \multirow{2}{*}{\textbf{Estimate}} & \textbf{Estimated SE} & \textbf{Estimated SE}\\
 &  & \textbf{(Bootstrap)} & \textbf{(Louis' Method)}\\
\midrule
\multicolumn{4}{c}{EMG Regression Model} \\
\midrule
 $\beta_0$  & $-1.4442$ & $0.1650$ & N/A\\
 $\beta_1$  & $3.2498$ & $0.1808$ & N/A\\
 $\sigma^2$  &$4.0191$ & $0.0864$ & N/A\\
  $\alpha$  & $0.0468$ & $0.0014$ & N/A\\
\midrule
\multicolumn{4}{c}{Flare Regression Model} \\
\midrule
 $\lambda$ & $0.6373 $ & $0.0163 $ & $0.0258 $\\
 $\beta_0$  & $-2.0112 $ & $0.0191 $ & $0.0250 $\\
 $\beta_1$  & $4.0392 $ & $0.0213 $ & $0.0257 $\\
 $\sigma^2$  & $0.2358 $ & $0.0193 $ & $0.0337 $\\
  $\alpha$  & $0.0491 $ & $0.0028 $ & $0.0047 $\\
\bottomrule
\end{tabular}
\end{table}

We next generate a dataset of size $n=200$ from a flare regression model with the exact same conditions as in the preceding EMG regression example, but where the error structure is now $\epsilon_i \sim \lambda \mathcal{N}(0, \sigma^2) + (1-\lambda) Exp(\alpha) $, $i=1, \ldots, n$, such that $\lambda=0.6$.  Thus, $\boldsymbol{\theta}=(0.6,(-2,4),0.5,0.05)^{\top}$. Since the flare regression model meets the regularity conditions necessary for Louis' method, we will estimate standard errors of point estimates through both bootstrapping and Louis' method. After fitting the simulated data with the ECM algorithm, the results are reported in the lower-half of Table \ref{sim1_res}. Unlike the EMG regression estimates, all of the parameter estimates for the present flare regression model appear sufficiently accurate. This result is consistent with the global convergence of the ECM algorithm established in Section \ref{subsec:ECM}. Moreover, the two different methods for estimating standard errors yield results that are roughly the same order of magnitude. Model-based clustering is then performed on the generated data assuming the flare regression model. Using the cut-off probability $p^*=0.80$, the posterior membership probabilities estimated by the ECM algorithm performed an excellent task in identifying observations belonging to the exponential component. For the $200$ generated observations, $126$ were generated from the Gaussian distribution and $74$ from the exponential distribution.  The clustering process yielded $132$ observations classified to the Gaussian component and the remaining $68$ to the exponential component. The $68$ observations classified to ``exponential'' were indeed generated from the exponential distribution, whereas only $6$ out of $132$ observations classified to ``Gaussian" were in fact generated from the exponential distribution. Please refer to Web Table 0 for additional results and Web Figure 2 for scatterplots of these simulations.

EM algorithms are known to be sensitive to starting values \citep{Biernacki,Karlis}. We perform a limited numerical study to assess the robustness of our ECM algorithm under various starting values. We proceed by generating a dataset of size $n=1000$ from a flare regression model with the parameter $\boldsymbol{\theta}=(0.5,(1,4),0.5,0.05)^{\top}$. For each generated sample, we estimate the parameters of the flare model by implementing the ECM algorithm with starting values generated as follows: $\lambda \sim Unif(0, 1)$, $\beta_i \sim \mathcal{N}(0, 1)$ for $i=0, 1$, $\sigma \sim Unif(0, 5)$, and $\alpha \sim Unif(0, 1)$. All of the parameter estimates are sufficiently accurate under these sets of random starting values. Please refer to Web Table 12 for the calculated RMSEs and mean biases for this part of the study.

\subsection{Performance of the ECM Algorithm}

We next perform a larger simulation study to assess the performance and robustness of the ECM algorithm for the flare regression model. This involves the calculation of root-mean-square errors (RMSEs) and biases. Three other candidate models, including the EMG regression model, are also estimated using the simulated data. Bayesian information criterion \citep[BIC;][]{schw} values are calculated to characterize the performance of the flare regression model and its corresponding ECM algorithm relative to the estimates obtained from the other candidate models.

We consider two different conditions for the regression predictors: one with a single predictor and one with two predictors. The predictors under each condition are generated as $x_{i,j} \sim Unif[-10,10]$, $i=1, \ldots, n$, $j =2,3$, and, again, setting $x_{i,1}\equiv 1$. %$\boldsymbol{\beta}$: data simulated with one predictor with intercept ($\boldsymbol{\beta}= (\beta_0, \beta_1)^{\top}$), and data simulated with two predictors with intercept ($\boldsymbol{\beta}= (\beta_0, \beta_1, \beta_2)^{\top}$). 
We further consider three different scenarios on the mixture components for the errors: well-separated components, moderately-separated components, and overlapping components. For each scenario, we randomly generated $B = 1000$ Monte Carlo samples for each of the sample sizes $n\in\{ 100, 500,1000\}$. The explicit parameter settings for all 12 data-generating models are given in Table \ref{tab:models}.  Please also refer to Web Figures 3--6 in the Supporting Information for visualizations of these simulation settings.

%For each single generated sample, data $(\textbf{y}, \textbf{X})$ with sample size $n$ is generated assuming the flare regression model.

%The matrix of covariates $\textbf{X}=(\textbf{x}_1, \ldots, \textbf{x}_n)^{\top}$ with $\textbf{x}_i=(x_{i, 1}, x_{i, 2})^{\top}$, for $i=1, \ldots, n$. We set $x_{i, 1}=1$ to allow  for an intercept term; every remaining entry is generated from an Uniform distribution in the interval $[-10, 10]$. $x_{ij} \sim $ $\mathbb{U}_{[-10,10]}$, for $i=1, \ldots, n$, and $j \neq 1$.

%Let $\textbf{y} = (y_1, \ldots, y_n)^{\top}$ be the response variable. We have:
%\[ y_i = \mathbf{x}_i \boldsymbol{\beta} + \epsilon_i, \]

%where $\epsilon_i \sim \lambda N(0, \sigma^2) + (1-\lambda) Exp(\alpha) $, for $i=1, \ldots, n$.

\begin{table}
\caption{\label{tab:models}Parameter settings for the simulation regarding the flare regression model}
\centering
\begin{tabular}{cccccc}
\toprule
\textbf{Setting} & \textbf{Component Structure} & $(\lambda, 1-\lambda)$ & $\boldsymbol{\beta}$&  $\sigma$ & $\alpha$\\
\midrule
M1 & Well-Separated & $(0.333, 0.667)$ & $(9, 3)$ & $0.5$ & $0.05$\\
M2 &Moderately-Separated & $(0.333, 0.667)$ &  $(9, 3)$ & $0.5$ & $0.17$\\
M3 &Overlapping & $(0.333, 0.667)$ & $(9, 3)$ & $0.5$ & $0.5$\\
M4 & Well-Separated & $(0.9, 0.1)$ & $(9, 3)$ & $0.5$ & $0.05$\\
M5 & Moderately-Separated & $(0.9, 0.1)$ & $(9, 3)$ & $0.5$ & $0.17$\\
M6 & Overlapping & $(0.9, 0.1)$ & $(9, 3)$ & $0.5$ & $0.5$\\
M7 & Well-Separated & $(0.5, 0.5)$ & $(-2, 1, 13)$ & $0.5$ & $0.04$\\
M8 & Moderately-Separated & $(0.5, 0.5)$ & $(-2, 1, 13)$ & $0.5$ & $0.2$\\
M9 & Overlapping & $(0.5, 0.5)$ & $(-2, 1, 13)$ & $0.5$ & $0.5$\\
M10 & Well-Separated & $(0.9, 0.1)$ & $(-2, 1, 13)$ & $0.5$ & $0.04$\\
M11 & Moderately-Separated & $(0.9, 0.1)$ & $(-2, 1, 13)$ & $0.5$ & $0.2$\\
M12 & Overlapping & $(0.9, 0.1)$ & $(-2, 1, 13)$ & $0.5$ & $0.5$\\
\bottomrule
\end{tabular}
\end{table}

%\subsection{Simulation Results}
%To assess how accurate and robust the ECM algorithm is, we compared the root-mean-squared errors (RMSEs) and mean biases of the estimated parameters. Please see Web Tables 2--5 in the supporting information for RMSE and mean bias calculations. Visualizations for the calculated RMSEs and mean biases are in Web Figures 7--18 of the supporting information.

Tables of the RMSEs and biases are given in Web Tables 2--5 in the Supporting Information.  Visualizations of these tabulated results are also given in Web Figures 7--18 of the Supporting Information. From these results, we can summarize some of the behavior exhibited by the RMSEs and biases across the 12 simulation models.

\iffalse
\begin{itemize}[leftmargin=1\parindent]
  \item In 11 out of the 12 simulation settings, both the calculated RMSEs and mean biases show sufficiently low magnitude orders in absolute value, with setting M12 being the only exception. 
  
  \item In most settings, both the calculated RMSEs and mean biases noticeably decrease when the sample size increases from $n=100$ to $n=1000$. 
  
  \item In most settings, the ECM algorithm outputs estimates with lower RMSE and mean bias values under the simulation scenario with well-separated components, and outputs estimation results with higher RMSE and mean bias values under the simulation scenario with overlapping components.
\end{itemize}
\fi

In 11 out of the 12 simulation settings, both the calculated RMSEs and mean biases show sufficiently low magnitude orders in absolute value, with setting M12 being the only exception. This demonstrates satisfactory precision of the ECM algorithm. The imprecise parameter estimates from setting M12 occur due to the fact that only a small proportion of data  were generated from the exponential component ($\lambda=0.9$) and that the exponential rate was set to be a large value ($\alpha=0.5$). A small mixing proportion for the exponential component, along with this larger exponential rate, will obfuscate the identifiability of the mixture model. Like traditional EM algorithms, estimating with ECM algorithms suffer when faced with model identifiability problems.

In most of the simulation settings, both the calculated RMSEs and mean biases noticeably decrease when the sample size increases from $n=100$ to $n=1000$. As expected, this behavior shows that our ECM algorithm, like other optimization algorithms, tends to perform better as the sample size becomes larger. 

Finally, in most of the simulation settings, the ECM algorithm outputs estimates with lower RMSE and mean bias values under the simulation scenario with well-separated components, and outputs estimation results with higher RMSE and mean bias values under the simulation scenarios with overlapping components. This shows that the ECM algorithm consistently produces more precise estimates when data arise from a mixture with well-separated components. Similar to the reason noted earlier about the subpar performance using data generated from setting M12, a lack of model identifiability emerges when the mixture components heavily overlap with each other. 

\subsection{Broader Model Comparison Study}\label{subsec:model}
We further examine the efficacy of the flare regression model by fitting three other models to the simulated data: the EMG regression model, the classic linear regression model, and a two-component mixture-of-linear-regressions model. The EMG regression model and its corresponding block-relaxation algorithm are as presented in Sections \ref{s:model} and \ref{s:inf}. The classic linear regression model is just the model in 
(\ref{linreg}), but where $\epsilon_i \sim N(0, \sigma^2)$, for $i=1, \ldots, n$. Here, the parameter vector of interest is $(\boldsymbol{\beta}^{\top}$, $\sigma^2)^{\top}$, which is estimated by ordinary least squares. For the two-component mixture of linear regressions, we have:
\[ y_i =
    \begin{cases}
      \mathbf{x}_i \boldsymbol{\beta}_1 + \epsilon_{i1}, & \text{with probability}\ \lambda; \\
      \mathbf{x}_i \boldsymbol{\beta}_2 + \epsilon_{i2}, & \text{with probability}\ 1-\lambda,
    \end{cases} \]
where the $ \epsilon_{ij} \sim N(0, \sigma_j^2)$ are (conditionally) $iid$, $i=1, \ldots, n$ and $j=1, 2$.  In this mixture model, the parameter vector of interest is $(\lambda$, $\boldsymbol{\beta}_1^{\top}$, $\boldsymbol{\beta}_1^{\top}$, $\sigma_1^2$, $\sigma_1^2)^{\top}$, whose closed-form MLEs can easily be derived using a standard EM algorithm \citep{deVeaux1989}. This EM algorithm is implemented by the  \verb|regmixEM()| function in the \verb|R| package \verb|mixtools| \citep{mixtools}.

On average, over 90\% of the time the flare regression model outperforms the other candidate models in terms of having the lowest BIC values. Please see Web Table 1 of the Supporting Information for detailed percentages of the lowest BIC values from all the candidate models estimated using the data generated from each simulation setting.  Overall, this demonstrates the strong performance of the flare regression model, especially in the context of data that one might typically consider modeling with EMG regression.

\section{Application: Human Aiming Performance Data}
\label{s:data}

\subsection{Data Description and Model Settings}

We now analyze data from the field study by \citet{Chapuis}, which was produced by unobtrusively collecting mouse input and corresponding GUI data from 24 users over several months. The dataset consists of more than 2 million movements. Many variables of interest were collected, including time, cursor position, mouse movements, mouse and button events (click, drag, long click), type and properties of the selected target, such as size and role of the target (e.g., resizing button, edge of a window), as well as information regarding the system used (which input device, desktop/laptop, Operating System). In this work, we used only information on movement time, distance to the target, and target size, which is consistent with applying Fitts' model. % \st{In Fitts' model, we have the movement time $t_e$ users took to select a rectangular target located at a distance $dist$ away, with width $w_t$ and height $h_t$ is given by $ t_e = a + b \log \bigg(1 + \frac{dist}{\text{min}(w_t,h_t)}\bigg) $.Typically in controlled studies with computer mice, $a\in [-0.1,0.1]$ and $b\in [0.1,0.2]$.}

 The theoretical model is given by $y = \beta_0 + \beta_1 x$, where $y = t_e/1000$ (converting milliseconds to seconds) and $x = \log_2 \bigg(1 + \frac{dist}{\text{min}(w_t,h_t)}\bigg)$. The variable $x$ is considered a difficulty measure, whose units are in bits. Typically in controlled studies with computer mice, $\beta_0 \in [-0.1,0.1]$ and $\beta_1 \in [0.1,0.2]$, where $\beta_0$ is in seconds, and $\beta_1$ is in seconds/bit. %\st{Unlike a controlled study, these data were gathered ``in the wild" by unobtrusively logging mouse cursor trajectories of the 24 participating users} \citep{Chapuis,Julien2}. 
 Compared to data typically collected in controlled studies, these data display noticeable positive skewness because users do not routinely try to minimize their task completion time.  Figure \ref{scatterdata} is a scatterplot of the data from one user in our data.  Notice the variability and considerable positive skewness in the task completion times. Unlike aiming data collected in controlled studies, a linear regression assuming zero-centered Gaussian noise is not an appropriate model for the present data. See Figure \ref{qq} for a plot the residuals versus the fitted values and the corresponding quantile-quantile plot when fitting a simple linear regression model to the data collected from the same user in Figure \ref{scatterdata}.
 
 \begin{figure}
\begin{center}
        \includegraphics[scale=1.0]{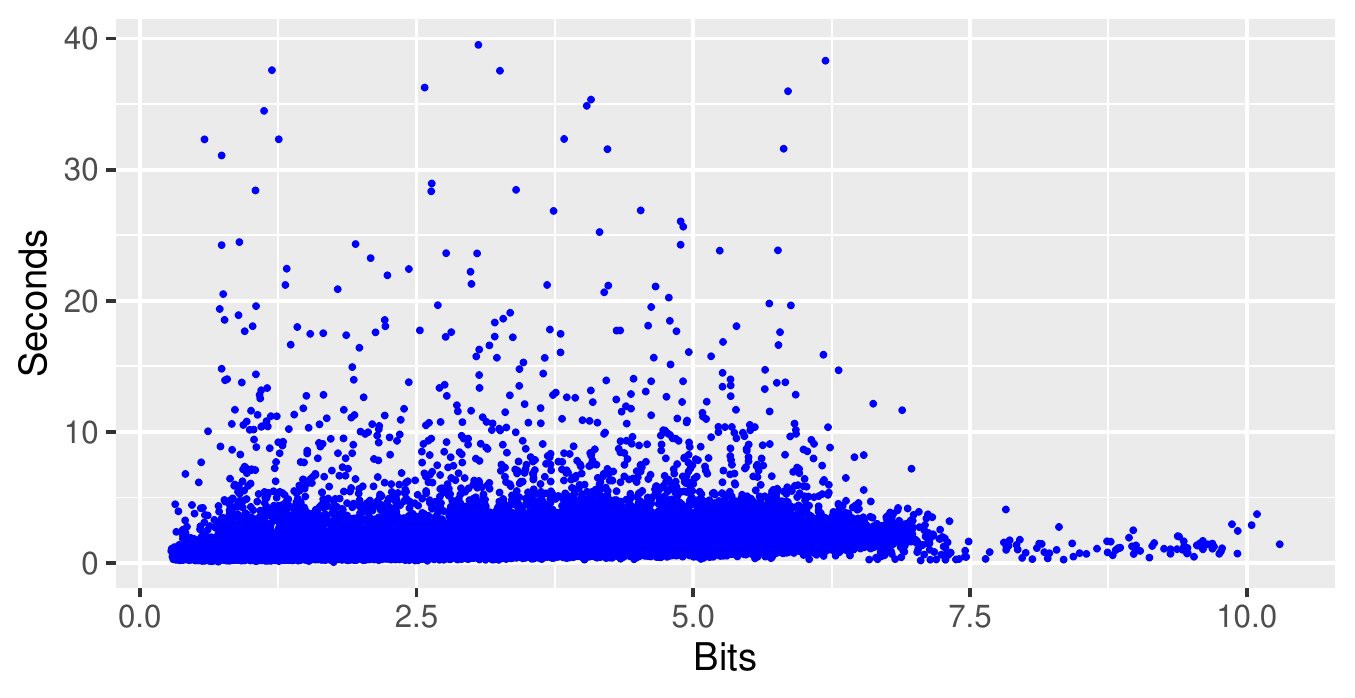}
   \caption{Scatterplot of real-world aiming performance from a user (User 1 in the Supporting Information)}\label{scatterdata}
\end{center}
\end{figure}

\begin{figure}
\begin{center}
    \subfigure[]
    {
        \includegraphics[scale=1]{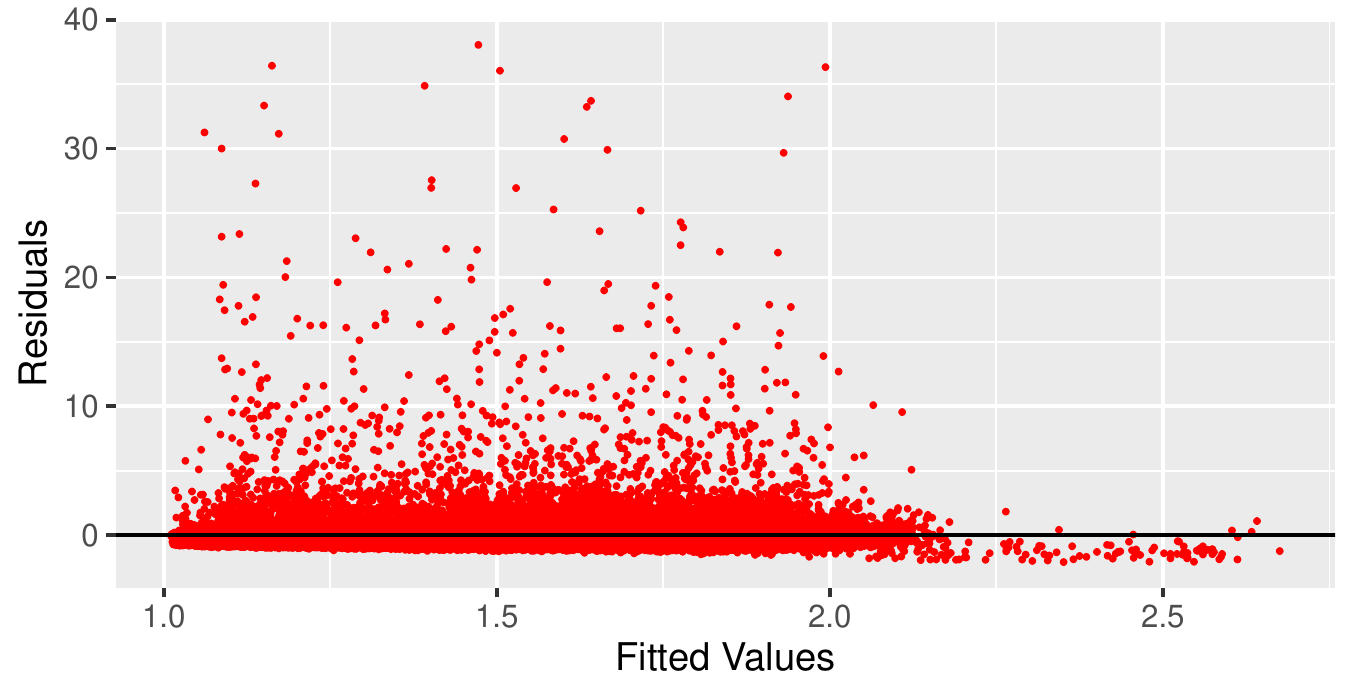}
      
    }
    \subfigure[]
    {
        \includegraphics[scale=1]{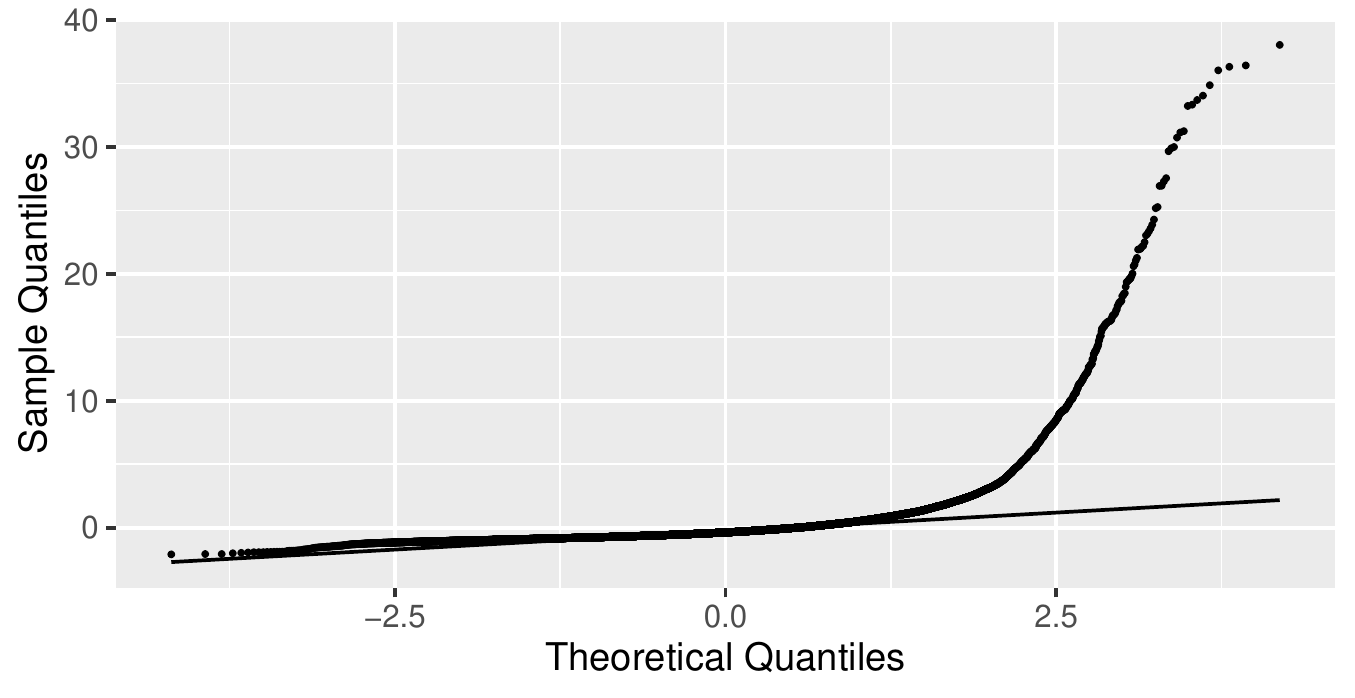}
        
    }
    
   \caption{(a) Scatterplot of the residuals versus the fitted values from a classic linear regression fit and (b) the corresponding quantile-quantile plot   }\label{qq}
\end{center}
\end{figure}

In \citet{Julien2}, the EMG regression model was estimated with a very small subset of the ``in the wild" data. Compared to classic linear regression with Gaussian errors, the estimated EMG regression parameters fall within the typical range of those for controlled experiments, and the fitted line matches well with the idea of minimum movement time.  We extend this previous work by fitting and comparing the four models used in the simulation study discussed in Section \ref{subsec:model}. Additionally, instead of only the small subset analyzed in \citet{Julien2}, we use the entire ``in the wild" dataset when estimating the four candidate models for each of the 24 users.

\subsection{Data Truncation and Estimation Results}

Besides the characteristic positive skewness of the ``in the wild" data, technical difficulties associated with trajectory segmentation frequently produce outliers. To obtain informative estimates, outliers produced by technical errors should be eliminated. However, there is no definitive indicator as to when an observation is an outlier.  Thus, four different cut-off thresholds are investigated: $T = 10s$, $T = 20s$, $T = 30s$, and $T = 40s$. When we set a fixed cut-off threshold, only observations with response time $y$ less than the threshold will be considered (i.e., $y_i \leq T$). As the cut-off threshold increases, more extreme values of long reaction times are present in the corresponding truncated data. 

After fitting four candidate models discussed in Section \ref{subsec:model}, we find the EMG regression model and the flare regression model consistently outperform the other two candidate regression models (i.e., simple linear regression with Gaussian errors and the two-component mixture of linear regressions) by producing significantly lower BIC values under all four cut-off thresholds. As we increase the cut-off threshold, however, more extreme values are naturally present, and the flare regression model tends to perform better than the EMG regression model.  Figure \ref{BIC} provides a visualization of how the BIC values for each participating user change as the cut-off threshold increases. In terms of their BIC values, blue cells correspond to the EMG regression being a better fit, while the red cells correspond to the flare regression model being a better fit.  As the threshold increases, more red cells appear in the figure.  Thus, we see the ability of the flare regression model relative to better characterize more extreme values relative to the EMG regression model for these aiming performance data.

\begin{figure}
\begin{center}
        \includegraphics[scale=1.2]{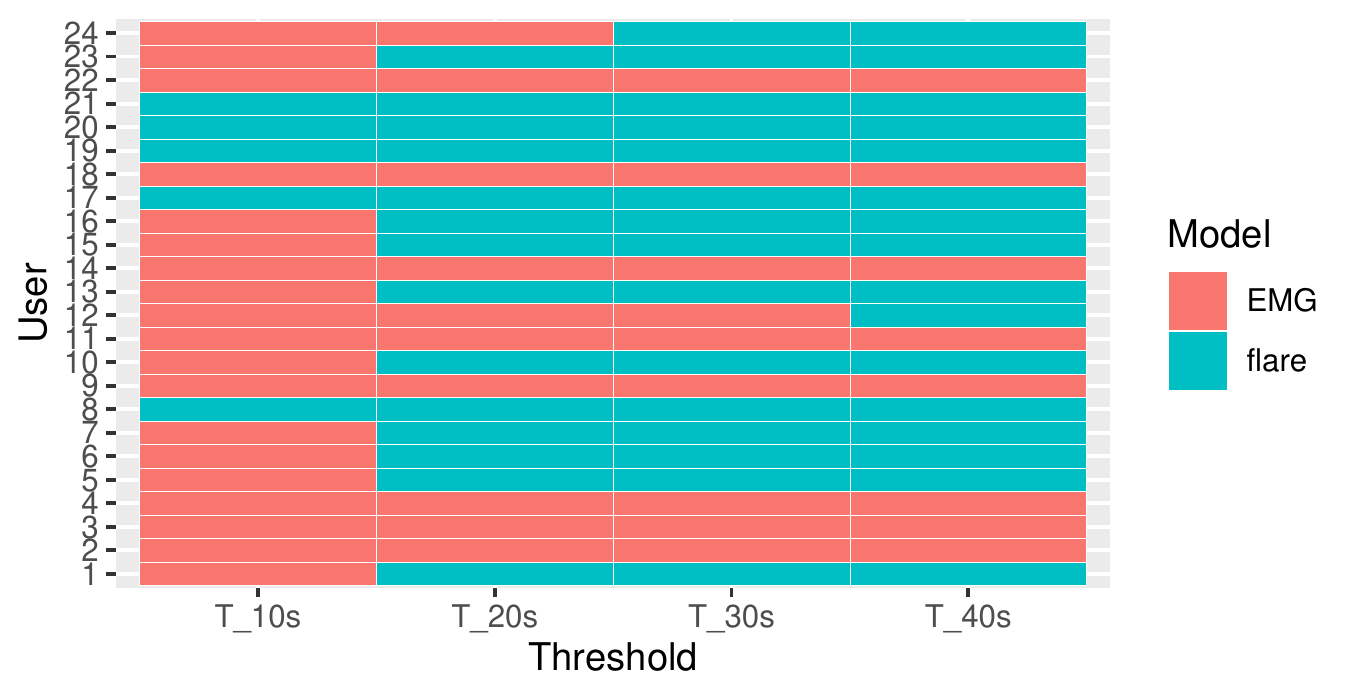}
   \caption{BIC comparisons for the EMG regression and flare regression model fits with different cut-off values. Orange cells indicate lower BIC values for the EMG model; green cells indicate lower BIC values for the flare regression model}\label{BIC}
\end{center}
\end{figure}

After balancing the need to eliminate outliers produced by possible technical errors and the necessity of preserving observations with long movement times, the cut-off threshold $T = 40s$ is selected. When the entire dataset is truncated using $T = 40s$, the flare regression model outperforms all the other candidate models with much lower BIC values for 16 out of the 24 participating users. Moreover, we receive similar parameter estimates after fitting the four candidate models to the data from each user. Similar parameter estimates show users tend to have similar movement times while completing the aiming tasks. Exact parameter estimates obtained for the four candidate models  using the truncated data with threshold $T = 40s$ are in Web Tables 7--10 of the Supporting Information and BIC comparisons are in Web Table 6 of the Supporting Information. Besides the BIC values, linear regression yields parameter estimates outside the typical intervals for $\beta_0$ and $\beta_1$ in controlled studies. The EMG and flare regression models, on the other hand, yield parameter estimates within the typical intervals. However, the two models behave differently: the EMG regression model tends to yield more estimates for the intercept inside the typical interval, whereas the flare regression model tends to yield more estimates for the slope inside the typical interval. A visualization for this comparison is in Web Figure 19 of the Supporting Information. Comparing to the intercept, researchers consider the slope to be a more informative parameter when measuring  movement difficulty \citep{Zhai,PLoSone}.

\subsection{Classification and Interpretations}

As noted in Section \ref{sec:intro}, both the EMG regression model and our flare regression model are able to effectively handle data with positive residuals relative to their underlying mean trend, which is a prominent feature of this ``in the wild" data.  However, as demonstrated in Section \ref{s:inf}, %\st{the two-component mixture structure on the errors for the flare regression model naturally allows for interpretations resulting from model-based clustering. In particular, $Z_i^{(\infty)}$ and $1-Z_i^{(\infty)}$ are the probabilities that an observation's error term came from, respectively, the Gaussian component and exponential component.   Taking the maximum of these two probabilities (i.e., whichever probability is greater than 0.50) is a way to perform  model-based clustering.} \textcolor{red}{
we can further perform model-based clustering on this ``in the wild" data based on $Z_i^{(\infty)}$, the posterior membership probabilities.  Those observations classified to the Gaussian component would represent the typical movement times of individuals in a controlled study, consistent with Fitts' law.  Those observations classified  to the exponential component would represent where a user is not trying to maximize their performance as well as any possible outliers that have not been removed due to the truncating strategy employed earlier.

For example, Figure \ref{scatterpost} is a scatterplot for the same user in Figure \ref{scatterdata} after fitting the flare regression model. In this figure, the flare regression model fit has been overlaid along with each observation color-coded according to their component membership based on their maximum posterior membership probability (i.e., the cut-off probability $p^*$ is set to be $0.50$).  We have been able to effectively characterize the regions where the user has almost certainly not been performing in an optimal capacity for the aiming task.  Moreover, this region could still include some outlying values associated with trajectory segmentation.

Note that we have done a hard classification based on an observation's posterior membership probabilities.  However, the noticeable delineation between the Gaussian component and exponential component, as seen in Figure \ref{scatterpost}, appears in each user's fit. Further examination shows that the posterior membership probabilities unsurprisingly hover around 0.50 for the two components in this region as this is where the two components have more substantial overlap.  If interested, one could apply a color gradient relative to the membership probabilities to visualize the uncertainty of assignment to one component over the other, thus providing a more nuanced interpretation about the user's performance.

\begin{figure}
\begin{center}
        \includegraphics[scale=1.2]{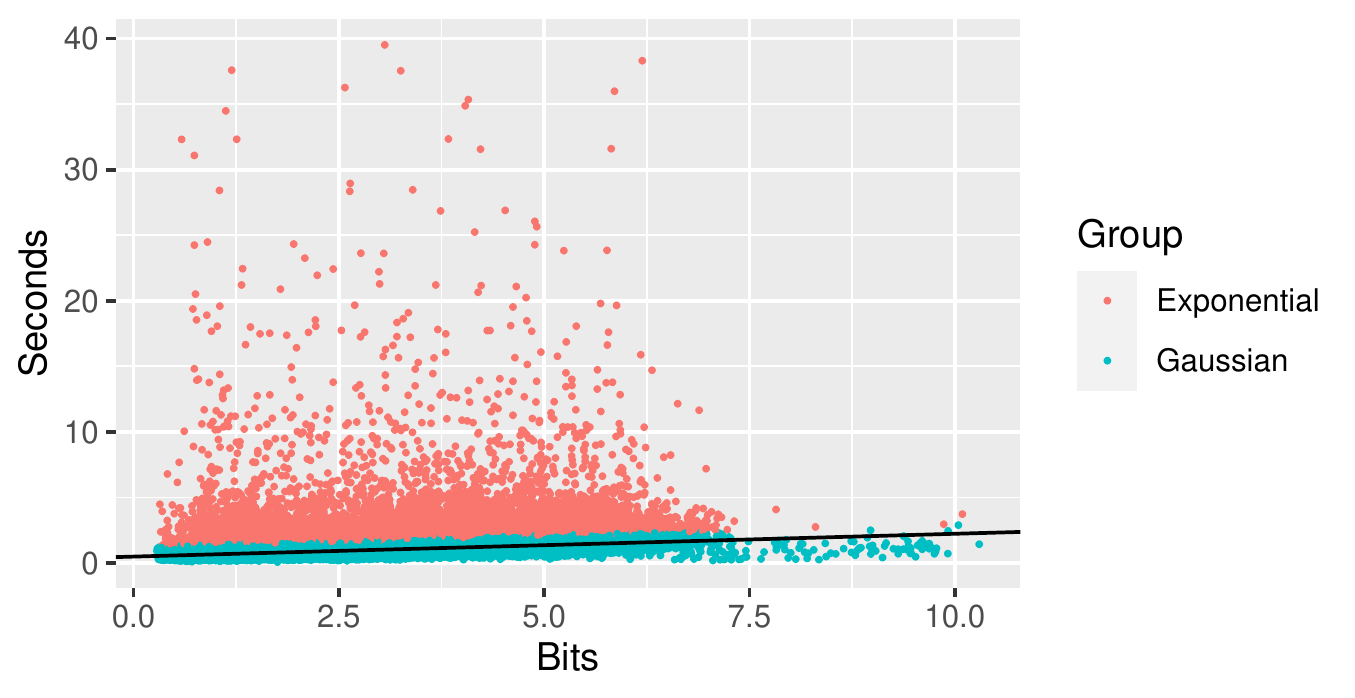}
   \caption{Scatterplot of the user's data in Figure \ref{scatterdata} (User 1), but with the flare regression model fit ($\hat{\beta}_0 = 0.49$, $\hat{\beta}_1 = 0.17$) overlaid along with each observation color-coded according to their component membership based on their maximum posterior membership probabilities (please refer to the first row of Web Table 7 in the Supporting Information for detailed estimation results) }\label{scatterpost}
\end{center}
\end{figure}

The ``in the wild" data distinguishes itself from other data collected in controlled studies by displaying observations with extremely long task completion times. Hence, identifying outliers is essential to obtain informative  results. In this study, we proceed with a conservative approach by selecting a uniform cut-off threshold ($T = 40s$), and drop all of the observations that exceed this threshold. Researchers may be interested in finding alternative methods to remove outliers. In the flare regression model, observations with exceedingly long duration times are highly likely to be classified to the exponential component. This clustering feature allows a framework for outlier removal and for HCI researchers to focus on those observations that are more consistent with what is typically observed in controlled studies. Subsequently, even more candidate models could be investigated for characterizing the remaining observations in the data, thus allowing more nuanced comparisons between ``in the wild" data and data from controlled studies for HCI research. %Moreover, researchers can further conduct a comparison study with inference results from data collected in other controlled settings, and such a possible comparison study would be an interesting topic for future research in the HCI field.}

\subsection{System Running Time}

When the sample size gets (excessively) large, system running time is used in assessing the performance of algorithms. All of the algorithms used in this study were implemented in \verb|R|. The mean sample size of the user datasets is about 19667. The mean system running time of the ECM algorithm is 8.8502 seconds,whereas, the mean system running time of the block-relaxation algorithm is 193.6914 seconds. For every user, the system elapsed time of the ECM algorithm is significantly shorter than the block-relaxation algorithm. Web Table 11 in the Supporting Information summarizes individual user's sample size and system elapsed times across the four candidate models. Theoretical and technical reasons behind this empirical finding remain a potential future direction of research.

\section{Concluding Remarks}
\label{s:discuss}
The EMG distribution is a practical model applied in various fields when researchers encounter positively skewed data, especially when it involves timing studies of tasks with human subjects like the human aiming performance data that motivated this study. This paper addressed some of the computational challenges in estimating an EMG regression model with multiple predictors by developing an iterative block-relaxation algorithm. Even though some concave properties of the EMG regression are proved, the fact the the EMG distribution is not a member of the exponential family prevents us from guaranteeing global concavity of the loglikelihood and convergence  of the block-relaxation algorithm. Alternatively, we introduced our novel flare regression model consisting of a two-component mixture structure on the errors, consisting of a Gaussian component and an exponential component. We developed an ECM algorithm for estimation, which unlike the block-relaxation algorithm, is guaranteed to converge to a local maximum of its likelihood function. After obtaining point estimates of the flare regression model, we briefly addressed the calculation of estimated standard errors for the parameter estimates. 

Both the extensive simulation study and the analysis of the human aiming performance data showed significant advantages of the flare regression model over the EMG regression model and other existing regression models. Not only is the flare regression model fit typically better than the EMG regression model fit (in terms of BIC values), the former also provides us with additional insight into different performance regions in the human aiming task.  Moreover, a timing comparison between the block-relaxation method for the EMG regression and the ECM algorithm for the flare regression shows superior performance for the latter. Overall, we have shown that the flare regression model is highly efficacious as a way for characterizing the human aiming performance data analyzed in this work.  

There are various avenues of future research to expand the work presented here.  For example, our model is proposed as the mixture alternative to the EMG regression model.  Of course, other skewed distributions could be explored for the second component of our model, and there may be some sort of optimality criterion that could be employed for identifying such a distribution.  But given the recent attention of EMG models in the HCI literature, and more generally their prevalence in reaction times applications, it makes sense that we proposed a mixture model analogue to that model.  This framework of comparing a mixture model whose components comprise an established convolution model could also be employed for other applications, such as the Voigt profile used in spectroscopy, which is given by a convolution of a Gaussian distribution and a Cauchy distribution. Another extension, as noted in the analysis of Section \ref{s:data}, is to better characterize subject-to-subject variability in terms of performance on this task.  Incorporation of random effects to allow for such subject heterogeneity would likely provide an even more informative model. Thus, generalizing both the EMG and flare regression models by incorporating random effects, and then comparing the results, would be an informative direction for future research.

%As noted in the analysis of Section \ref{s:data}, there is, obviously, subject-to-subject variability in terms of performance on this task.  Incorporation of random effects to allow for such subject heterogeneity would likely provide an even more informative model. Thus, generalizing both the EMG and flare regression models by incorporating random effects, and then comparing the results, would be an informative direction for future research.  
\newpage

\appendix

\section{Appendix: Algorithms}

\begin{algorithm}[H]\label{alg1}
	\caption{Block-Relaxation Method for EMG Regression} 
	\hspace*{\algorithmicindent} \textbf{Input:}  $\mathbf{X}$ (matrix of predictors), $\mathbf{y}$ (response vector)\\
    \hspace*{\algorithmicindent} \textbf{Output:} Final estimate $\widehat{\boldsymbol{\psi}}$ for $\boldsymbol{\psi} = ( \boldsymbol{\beta}^{\top}, \sigma^2, \alpha)$
	\begin{algorithmic}[1]
	
	 \State Initialize the iteration $t=0$; set the difference $\textit{diff} = 1$
	 
	 \State Initialize the method by selecting starting values $\boldsymbol{\psi}^{(0)} = ( {\boldsymbol{\beta}^{(0)}}^{\top}, {\sigma^2}^{(0)},  \alpha^{(0)} )^{\top}$
	 
	  \While{$\textit{diff} > \epsilon$}{
	  
	  Update $\boldsymbol{\psi}_1^{(t+1)} =  \underset{\boldsymbol{\psi}_1}{\arg\max}\ \mathbf{Q}(\boldsymbol{\psi}_1; \boldsymbol{\psi}_2^{(t)}) $
	  
	  Update $\boldsymbol{\psi}_2^{(t+1)} = \underset{\boldsymbol{\psi}_2}{\arg\max} \ \mathbf{Q}(\boldsymbol{\psi}_2; \boldsymbol{\psi}_1^{(t+1)}) $
	 
	  Update the difference: $\textit{diff} \longleftarrow |\mathbf{Q}(\boldsymbol{\psi}^{(t+1)})-\mathbf{Q}(\boldsymbol{\psi}^{(t)})| $
	  	          
	  Update $\boldsymbol{\psi}^{(t)} \longleftarrow  \boldsymbol{\psi}^{(t+1)}$

	  $t \longleftarrow t+1$
       
    \EndWhile}
	 
    \State  Output $\widehat{\boldsymbol{\psi}} = \boldsymbol{\psi}^{(t)}$ 
    
	\end{algorithmic} 
\end{algorithm}

\

\

\

\

\

\

\

\

\

\

\begin{algorithm}[H]\label{alg2}
	\caption{ECM Algorithm for Flare Regression Model} 
	\hspace*{\algorithmicindent} \textbf{Input:}  $\mathbf{X}$ (matrix of predictors), $\mathbf{y}$ (response vector)\\
    \hspace*{\algorithmicindent} \textbf{Output:} Final estimate $\widehat{\boldsymbol{\theta}}$ for $\boldsymbol{\theta} = (\lambda,  \boldsymbol{\beta}^{\top}, \sigma^2, \alpha)^{\top}$
	\begin{algorithmic}[1]
	
	 \State Initialize the iteration $t=0$
	 
    \State Initialize the estimation by selecting starting values $\boldsymbol{\theta}^{(0)} = (\lambda^{(0)}, {\boldsymbol{\beta}^{(0)}}^{\top}, {\sigma^2}^{(0)}, \alpha^{(0)})^{\top} $
    
    \State Estimate the initial hidden variable:     $  Z_i^{(0)} = \frac{\frac{\lambda^{(0)}}{2\pi\sigma^{2(0)}} \exp \Big\{ -\frac{1}{2\sigma^{2(0)}} (y_i - \mathbf{x}_i^{\top} \boldsymbol{\beta}^{(0)})^2  \Big\}}{f(y_i; \mathbf{x}_i, \boldsymbol{\psi}^{(0)})} $
    
    \State Initialize the difference $\textit{diff} = 1$; update the objective function $m(\boldsymbol{\beta}) = \mathbf{Q}(\boldsymbol{\beta}; \boldsymbol{\theta}^{(0)})$

    \While{$\textit{diff} > \epsilon$}{
    
         Estimate parameters in $\boldsymbol{\theta}_2$: 
        $ \boldsymbol{\beta}^{(t+1)} = \boldsymbol{\beta}^{(t)} -  \left[\frac{d^2m}{d\boldsymbol{\beta}^2}\right] ^{-1}\Biggl|_{\boldsymbol{\beta}=\boldsymbol{\beta}^{(t)}}   \frac{dm}{d\boldsymbol{\beta}}\Biggl|_{\boldsymbol{\beta}=\boldsymbol{\beta}^{(t)}}$
        
        Set $\boldsymbol{\theta}^{(t+1/2)} = (\lambda^{(t)}, {\boldsymbol{\beta}^{(t+1)}}^{\top}, {\sigma^2}^{(t)}, \alpha^{(t)})^{\top}$
        
        Re-estimate the hidden variable:  $Z_i^{(t+1/2)} = \frac{\frac{\lambda^{(t)}}{2\pi\sigma^{2(t)}} \exp \Big\{ -\frac{1}{2\sigma^{2(t)}} (y_i - \mathbf{x}_i^{\top} \boldsymbol{\beta}^{(t+1)})^2  \Big\}}{f(y_i; \mathbf{x}_i, \boldsymbol{\theta}^{(t+1/2)})} $
        
        Estimate parameters in $\boldsymbol{\theta}_2$: 
        \[ \lambda^{(t+1)} = \frac{1}{n} \sum_{i=1}^n Z_i^{(t+1/2)} \]
        
        \[      \sigma^{2(t+1)} = \frac{\sum_{i=1}^n Z_i^{(t+1/2)}(y_i - \mathbf{x}_i^{\top} \boldsymbol{\beta}^{(t+1)})^2}{\sum_{i=1}^n Z_i^{(t+1/2)}} \]
        
        \[     \alpha^{(t+1)} = \frac{\sum_{i=1}^n (1-Z_i^{(t+1/2)})}{\sum_{i=1}^n (1-Z_i^{(t+1/2)}) (y_i - \mathbf{x}_i^{\top} \boldsymbol{\beta}^{(t+1)})} \]
        
        Set $\boldsymbol{\theta}^{(t+1)} = (\lambda^{(t+1)}, {\boldsymbol{\beta}^{(t+1)}}^{\top}, {\sigma^2}^{(t+1)}, \alpha^{(t+1)})^{\top}$
        
        Update the difference: $\textit{diff} \longleftarrow ||{\boldsymbol{\theta}^{(t+1)}}  - {\boldsymbol{\theta}^{(t)}}||_{\infty} $
        
        Re-estimate the hidden variable: $Z_i^{(t+1)} = \frac{\frac{\lambda^{(t+1)}}{2\pi\sigma^{2(t+1)}} \exp \Big\{ -\frac{1}{2\sigma^{2(t+1)}} (y_i - \mathbf{x}_i^{\top} \boldsymbol{\beta}^{(t+1)})^2  \Big\}}{f(y_i; \mathbf{x}_i, \boldsymbol{\theta}^{(t+1)})} $
        
        Update $\boldsymbol{\theta}^{(t)} \longleftarrow  \boldsymbol{\theta}^{(t+1)}$, $t \longleftarrow t+1$
       
    \EndWhile}
    
   \State  Output $\widehat{\boldsymbol{\theta}} = \boldsymbol{\theta}^{(t)}$ 
    
	\end{algorithmic} 
\end{algorithm}

\bibliographystyle{rss}
\bibliography{example}
\end{document}